\documentclass[pdflatex,sn-mathphys-num]{sn-jnl}
\usepackage{graphicx}%
\usepackage{multirow}%
\usepackage{amsmath,amssymb,amsfonts}%
\usepackage{amsthm}%
\usepackage{mathrsfs}%
\usepackage[title]{appendix}%
\usepackage{xcolor}%
\usepackage{textcomp}%
\usepackage{manyfoot}%
\usepackage{booktabs}%
\usepackage{algorithm}%
\usepackage{algorithmicx}%
\usepackage{algpseudocode}%
\usepackage{listings}%
\usepackage{bbm}
\usepackage{bm,stmaryrd,color}
\usepackage{hyperref}
\usepackage{times}
\usepackage{natbib,dsfont}
\usepackage{epstopdf}
\usepackage{graphics}
\usepackage{subfig}
\usepackage{float}
\usepackage{graphicx}
\theoremstyle{thmstyleone}
\newtheorem{theorem}{Theorem}
\newtheorem{proposition}{Proposition}
\newtheorem{lemma}{Lemma}
\theoremstyle{thmstyletwo}

\theoremstyle{thmstylethree}
\newtheorem{corollary}{Corollary}
\theoremstyle{remark}
\newcommand{\E}{E}

\newcommand{\I}{\mathcal{I}}
\newcommand{\T}{T}

\DeclareMathOperator{\expit}{expit}

\DeclareMathOperator{\diag}{diag}
\DeclareMathOperator{\err}{err}
\DeclareMathOperator{\ARE}{ARE}
\DeclareMathOperator{\RE}{RE}

\newcommand{\en}{e}
\DeclareMathOperator*{\argmax}{arg\,max}

\theoremstyle{thmstyleone}

\theoremstyle{thmstyletwo}

\theoremstyle{thmstylethree}

\begin{document}

\title[Article Title]{Favourable Missingness in Semi-Supervised Classification for Exponential Mixture Models}

\author[1]{\fnm{Huanchao} \sur{Zhou}}\email{zhouhc@xtu.edu.cn}

\author[2]{\fnm{Jinran} \sur{Wu}}\email{jinran.wu@uq.edu.au}

\author[3]{\fnm{Fariborz} \sur{Setoudehtazangi}} \email{fariborz.setoudehtazangi@studenti.unipd.it}

\author[2]{\fnm{Geoffrey J.} \sur{McLachlan}} \email{g.mclachlan@uq.edu.au}

\affil[1]{\orgdiv{School of Mathematics and Computational Science}, \orgname{Xiangtan University}, \orgaddress{  \country{China}}}

\affil*[2]{\orgdiv{School of Mathematics and Physics}, \orgname{The University of Queensland}, \orgaddress{ \country{Australia}}}

\affil[3]{\orgdiv{Dipartimento di Scienze Statistiche}, \orgname{Università di Padova}, \orgaddress{\country{Italy}}}

\abstract{
Semi-supervised classifiers are commonly trained from samples in which all features are observed but some class labels are missing. When label missingness is independent of the observed data, unavailable class memberships reduce Fisher information relative to a completely classified sample. We study a different regime in which the probability of label missingness depends on posterior classification uncertainty, so that the observed missing-label indicators can themselves carry information about the Bayes decision boundary. Building on the conditionally weighted information decomposition of Ahfock and McLachlan, we develop this phenomenon for a two-component exponential mixture. Although the exponential model is non-Gaussian, asymmetric, and supported on the positive half-line, its log-posterior odds remain linear in the feature. We derive Bayes' rule and its exact error rate, formulate entropy-logistic and squared-discriminant missingness mechanisms, and obtain the full partially classified likelihood. We then derive a decomposition of the Fisher information into the complete-data information, the conditionally weighted loss due to missing labels, and the information contributed by the missing labels. 
Numerical quadrature identifies regions in which the full likelihood classifier has asymptotic relative efficiency above or below one. Monte Carlo experiments with finite training samples broadly support the population calculations, with the largest departures from the asymptotic predictions occurring near the transition at which the relative efficiency crosses one.
}

\keywords{exponential mixture, semi-supervised learning, informative missing labels, Fisher information, Bayes' rule, asymptotic relative efficiency}

\maketitle

\section{Introduction}\label{sec1}
We consider semi-supervised classification in the setting where the features are observed for all units in the training sample, whereas the class labels are available for only a subset of them. Such data arise when routine measurements are inexpensive but definitive labels require expert adjudication, invasive testing, prolonged follow up, or another costly procedure. 
Likelihood based methods for partially classified samples have a long history in discriminant analysis and finite mixture modelling \citep{mclachlan1975iterative, ganesalingam1978efficiency, ONeill1978,McLachlanGordon1989, Dempster1977,mclachlan2019finite}.
Broader accounts of semi-supervised methodology are provided by
\citep{hady2013semi,vanEngelenHoos2020,AhfockMcLachlan2020}.

When labels are missing completely at random (MCAR), the unavailable class memberships must be treated as latent variables, resulting in la loss of Fisher information than that available from a completely classified sample. 
This loss can be substantial when the component distributions overlap, because the unobserved labels then contain considerable information about the discriminant structure \citep{ONeill1978,Louis1982,CastelliCover1995,CastelliCover1996} However, the probability that a label is missing varies across the observed feature space in many applications. In particular, the probability that a label is missing may vary with the observed feature value and may be higher for observations with greater classification uncertainty. Consequently, the observed pattern of missing labels may provide information about the classifier parameters.

Motivated by this observation, \citep{AhfockMcLachlan2020} modeled the probability of a missing label as a function of classification uncertainty. For two normal populations with equal variances, they showed that the observed missing labels can carry information about the separating hyperplane. This additional information can outweigh the loss caused by the unavailable labels in some settings. Consequently, a classifier fitted using the full partially classified likelihood can have a smaller asymptotic expected error rate than a classifier based on the completely classified likelihood that does not use the missing labels. Subsequent work has further developed this framework, principally for Gaussian mixtures, and has considered its broader likelihood implications \citep{Lyu2024,LyuAhfockThompsonMcLachlan2024, WuWangMcLachlan2026}.

Although the underlying likelihood principle is not restricted to normal models, explicit distribution specific analysis have remained concentrated on Gaussian mixtures. Therefore, it is unclear how asymmetry and support restrictions affect the balance between the information lost through missing labels and the information contributed by their missingness pattern. 
Positive and right skewed measurements, including waiting times, lifetimes, and inter arrival times, can often be modeled using exponential mixtures \citep{Jewell1982,mclachlan2019finite}.
Mixtures of distinct exponential components are identifiable up to label permutation under standard conditions \citep{Teicher1961,YakowitzSpragins1968}.
Unlike Gaussian mixtures, exponential mixtures are supported on the positive half line, and the zero of the log posterior odds need not lie in the interior of that support. These features make the exponential mixture a useful non-Gaussian model for studying favourable informative missingness and clarifying the mechanisms through which it can arise in broader classification settings.

In this paper, we consider a two component exponential mixture in which labels are more likely to be missing for observations with greater classification uncertainty. 
Our central question is the extent to which the Fisher information carried by the missing label pattern compensates for the information lost because some class labels are unavailable, after accounting for the nuisance parameters.
Although the component distributions are asymmetric, the log posterior odds remain linear in the observation. 
The Bayes' rule is therefore determined by a single positive threshold when the zero of the log posterior odds lies in the interior of the support, and otherwise reduces to a constant rule. We derive the corresponding Bayes' error and observed data likelihood, and relate the probability of a missing label to the entropy of the posterior class probabilities.
A local expansion of the entropy mechanism also establishes its connection with the squared discriminant model of \citet{AhfockMcLachlan2020}.

Our main result expresses the efficient Fisher information for the discriminant coefficients as the information from complete classification, the conditionally weighted loss caused by unobserved class labels, and the positive semidefinite information carried by the missing labels. 
The conditional weighting is essential because, under feature dependent missingness, observations with missing labels do not form a simple random subsample of the feature distribution. For the exponential mixture, all components of the decomposition admit one dimensional integral representations. 
We then translate the information comparison into the asymptotic relative efficiency of the estimated Bayes' rules. Numerical calculations identify parameter regions in which this relative efficiency is either above or below one, and Monte Carlo experiments examine the corresponding finite sample behaviour.

The remainder of the paper is organised as follows.
Section~\ref{sec:framework} introduces the exponential mixture model, the partially classified likelihoods, the missing label mechanism, and the estimated Bayes' rules.
Section~\ref{sec:information} develops the Fisher information decomposition and its explicit form for the exponential mixture.
Section~\ref{sec:are} derives the asymptotic expected excess error and the resulting relative efficiency, together with its numerical evaluation.
Section~\ref{sec:simulations} reports the finite sample simulation study.
Section~\ref{sec:discussion} concludes with a discussion.

\section{Preliminaries} \label{sec:framework}
\subsection{Exponential mixture model and Bayes' rule}

Let $C_1$ and $C_2$ denote two classes with prior probabilities $\pi_1$ and $\pi_2$ respectively, where 
$$\pi_i=\Pr(Z=i),\qquad \pi_i>0,\qquad \pi_1+\pi_2=1,$$
and $Z$ denotes the class membership. 
Conditional on membership of class $C_i$, the positive feature $Y$ has exponential density, parametrised by its mean,
\begin{equation}\label{eq:expdensity}
 f_i(y;\mu_i)=\frac{1}{\mu_i}\exp\left(-\frac{y}{\mu_i}\right),
 \qquad y>0,\quad \mu_i>0,\quad i=1,2.
\end{equation}
The marginal density of $Y$ is
\begin{equation}
 f_Y(y;\bm\theta)
 =\pi_1f_1(y;\mu_1)+\pi_2f_2(y;\mu_2),
 \qquad
 \bm\theta=(\pi_1,\mu_1,\mu_2)^{\T}.
 \label{eq:mixture-density}
\end{equation}

Let $R(y;\bm\theta)$ denote the Bayes' rule.  
Thus $R(y;\bm\theta)=h$ if
$$ h=\argmax_{i=1,2}\tau_i(y;\bm\theta), $$
where
\begin{equation}
 \tau_i(y;\bm\theta)=\Pr(Z=i\mid Y=y)
 =\frac{\pi_i f_i(y;\mu_i)}{f_Y(y;\bm\theta)}
 \label{eq:posterior-probability}
\end{equation}
is the posterior probability that an entity with feature value $y$ belongs to Class $C_i$.

For the exponential class distributions, the Bayes' rule depends on $\bm\theta$ through the discriminant coefficient vector 
$\bm\beta=(\beta_0,\beta_1)^{\T}$. 
In particular,
\begin{align}
 d(y;\bm\beta)=\log\left\{\frac{\tau_1(y;\bm\theta)}
 {\tau_2(y;\bm\theta)}\right\}=\beta_0+\beta_1y,\label{eq:discriminant}
\end{align}
where
\begin{equation}
 \beta_0=\log\frac{\pi_1}{\pi_2}+\log\frac{\mu_2}{\mu_1},\qquad
 \beta_1=\frac{1}{\mu_2}-\frac{1}{\mu_1}. \label{eq:beta-definition}
\end{equation}
It follows that
\begin{equation}
 \tau_1(y;\bm\beta)
 =\frac{\exp\{d(y;\bm\beta)\}}{1+\exp\{d(y;\bm\beta)\}},
 \qquad
 \tau_2(y;\bm\beta)=\frac{1}{1+\exp\{d(y;\bm\beta)\}}.
 \label{eq:posterior-logistic}
\end{equation}
We henceforth write the reduced form of the Bayes' rule as 
\begin{equation}\label{eq:bayesrule}
 R(y;\bm\beta)= \begin{cases}
                   1,&d(y;\bm\beta)>0,\\
                   2,&d(y;\bm\beta)\leq0.
                \end{cases}
\end{equation}

When $\mu_1=\mu_2$, the two class conditional distributions coincide and the feature contains no information about class membership. Therefore, we focus on the case
\begin{equation}
 \mu_1<\mu_2.\label{eq:ordering}
\end{equation}
Under this ordering, $\beta_1<0$ and $d(y;\bm\beta)$ is strictly decreasing on $(0,\infty)$.

Define the limiting posterior odds
\begin{equation}
	r_0=\frac{\pi_1\mu_2}{\pi_2\mu_1}. \label{eq:r0def}
\end{equation}
Then
$$ \lim_{y\downarrow0} \frac{\tau_1(y;\bm\theta)}{\tau_2(y;\bm\theta)} =r_0, $$
and $\beta_0=\log r_0$. Hence the sign of $r_0-1$ determines whether the Bayes decision boundary lies in the interior of the support. 
If $r_0>1$, the boundary is
\begin{equation}
	t_0 =-\frac{\beta_0}{\beta_1} =\frac{\mu_1\mu_2}{\mu_2-\mu_1}
	\log\left(\frac{\pi_1\mu_2}{\pi_2\mu_1}\right)>0. \label{eq:t0}
\end{equation}

\begin{proposition}[Bayes' rule and error rate] \label{prop:bayeserror}
	Assume that $\mu_1<\mu_2$.
	\begin{enumerate}[label=(\roman*)]
		\item If $r_0>1$, the Bayes' rule assigns $0<y<t_0$ to class
		$C_1$ and $y\geq t_0$ to class $C_2$. 
        Its optimal overall error rate is
		\begin{align}
			\err(\bm\theta) &= \pi_1\Pr\{R(Y;\bm\beta)=2\mid Z=1\}
			+\pi_2\Pr\{R(Y;\bm\beta)=1\mid Z=2\} \notag\\
			&=\pi_1\exp(-t_0/\mu_1)+\pi_2\{1-\exp(-t_0/\mu_2)\}.
			\label{eq:bayeserror}
		\end{align}
		
		\item If $r_0\leq1$, the Bayes' rule assigns every $y>0$ to class $C_2$, and $\err(\bm\theta)=\pi_1$.
	\end{enumerate}
\end{proposition}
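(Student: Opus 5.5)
The plan is to read the decision regions directly off the sign of the linear discriminant \eqref{eq:discriminant}, and then integrate the exponential densities \eqref{eq:expdensity} over those regions. Under the ordering \eqref{eq:ordering}, $\beta_1<0$, so $d(y;\bm\beta)=\beta_0+\beta_1 y$ is strictly decreasing on $(0,\infty)$. Its limit as $y\downarrow 0$ is $\beta_0=\log r_0$.

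\emph{Case (i), $r_0>1$.} Here $\beta_0>0$, so $d$ has a unique zero at $t_0=-\beta_0/\beta_1>0$, which is \eqref{eq:t0}. Monotonicity gives $d(y;\bm\beta)>0$ exactly for $0<y<t_0$, and $d(y;\bm\beta)\le 0$ for $y\ge t_0$. By \eqref{eq:bayesrule}, the rule therefore assigns $(0,t_0)$ to $C_1$ and $[t_0,\infty)$ to $C_2$. The error rate is obtained by conditioning on $Z$. A member of $C_1$ is misallocated when $Y\ge t_0$, which has probability $\exp(-t_0/\mu_1)$. A member of $C_2$ is misallocated when $Y<t_0$, which has probability $1-\exp(-t_0/\mu_2)$. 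Weighting these by $\pi_1$ and $\pi_2$ gives \eqref{eq:bayeserror}.

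I would also justify calling this rate ``optimal'' with the standard pointwise argument. Any rule has error $\int [\pi_1 f_1 \mathbb{1}\{\text{assign }2\}+\pi_2 f_2\mathbb{1}\{\text{assign }1\}]\,dy$. For each $y$ this integrand is minimised by choosing the class with the larger $\tau_i(y;\bm\theta)$, which is exactly \eqref{eq:bayesrule}. Ties occur only at the single point $y=t_0$, a set of measure zero, so they do not affect the error.

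\emph{Case (ii), $r_0\le 1$.} Here $\beta_0\le 0$, and strict decrease gives $d(y;\bm\beta)<\beta_0\le 0$ for every $y>0$. Hence every $y>0$ is assigned to $C_2$. The only errors are members of $C_1$, all of which are misallocated, so the error is $\pi_1\cdot 1=\pi_1$. As a consistency check, this agrees with the formula in (i) evaluated at $t_0=0$.

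There is no real obstacle. The only points needing care are the boundary conventions: $d=0$ at $y=t_0$ is assigned to $C_2$ in (i), and $r_0=1$ is placed in (ii). Both involve sets of probability zero, so they leave the error rates unchanged.
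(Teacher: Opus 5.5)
Your proposal is correct and follows the paper's proof essentially step for step: $\beta_1<0$ makes $d(y;\bm\beta)=\log r_0+\beta_1 y$ strictly decreasing. When $r_0>1$ this gives the unique root $t_0$ and the two exponential tail probabilities; when $r_0\le 1$ it gives $d<0$ on the whole support and error $\pi_1$. Your pointwise-minimisation justification of optimality and the $t_0=0$ consistency check are harmless additions that the paper omits.
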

\noindent The proof is given in Appendix~\ref{app:proof-bayes}.

\subsection{Partially classified samples and likelihoods}
\label{subsec:likelihoods}
Let $$ x_{\rm CC}=(x_1^{\T},\ldots,x_n^{\T})^{\T}, \quad 
x_j=(y_j,z_j)^{\T}, \quad j = 1, \dots, n $$
denote a completely classified sample of $n$ independent observations from $X=(Y,Z)^{\T}$. 
Write $z_{ij}=1$ if $z_j=i$ and zero otherwise. For a partially classified sample, let $m_j$ be the missing label indicator, equal to one if $z_j$ is missing and zero if it is available. 
The partially classified sample $x_{\rm PC}$ therefore contains $(y_j,z_j)$ when $m_j=0$ and only $y_j$ when $m_j=1$.

Let $M_j$ denote the random variable corresponding to $m_j$. 
Under the assumption
\begin{equation}
 \Pr(M_j=1\mid y_j,z_j) =\Pr(M_j=1\mid y_j) =q(y_j;\bm\beta,\bm\xi),
 \label{eq:MAR}
\end{equation}
where $\bm\xi$ denotes the parameter of the missing label mechanism, so that $M_j\perp Z_j\mid Y_j$. 
The entropy based specification of $q$ is introduced in Section~\ref{sec:mechanism}.

The log likelihood for the completely classified sample is
\begin{equation}
 \log L_{\rm CC}(\bm\theta)=\sum_{j=1}^n\sum_{i=1}^2 z_{ij}
 \log\{\pi_i f_i(y_j;\mu_i)\}. \label{eq:cclik}
\end{equation}
The log likelihood based on the partially classified sample that ignores the missing label mechanism is
\begin{align}
 \log L_{\rm PC}^{(\mathrm{ig})}(\bm\theta) =
 \sum_{j=1}^n(1-m_j)\sum_{i=1}^2 z_{ij}\log\{\pi_i f_i(y_j;\mu_i)\}
 +\sum_{j=1}^n m_j\log f_Y(y_j;\bm\theta).\label{eq:iglik}
\end{align}
The likelihood formed from the missing label indicators has logarithm
\begin{equation}
 \log L_{\rm PC}^{(\mathrm{miss})}(\bm\beta,\bm\xi)
 =\sum_{j=1}^n \left[(1-m_j)\log\{1-q(y_j;\bm\beta,\bm\xi)\}
 +m_j\log q(y_j;\bm\beta,\bm\xi) \right]. 
 \label{eq:missing-loglikelihood}
\end{equation}
Equivalently, the observed data likelihood contribution of the $j$th observation is
\begin{align}
 \left[\prod_{i=1}^2\{\pi_i f_i(y_j;\mu_i)\}^{z_{ij}}
 \{1-q(y_j;\bm\beta,\bm\xi)\}\right]^{1-m_j}
 \left[ f_Y(y_j;\bm\theta)q(y_j;\bm\beta,\bm\xi)\right]^{m_j}.
 \label{eq:observedcontribution}
\end{align}
Thus, with $\bm\Psi=(\bm\theta^{\T},\bm\xi^{\T})^{\T}$, the full log likelihood is
\begin{equation}
 \log L_{\rm PC}^{(\mathrm{full})}(\bm\Psi) 
 =\log L_{\rm PC}^{(\mathrm{ig})}(\bm\theta)
 +\log L_{\rm PC}^{(\mathrm{miss})}(\bm\beta,\bm\xi).
 \label{eq:fulllik}
\end{equation}

Condition~\eqref{eq:MAR} is missing at random in the sense of \citet{Rubin1976}, because the missingness probability depends only on the fully observed feature $Y$. 
However, the data model and missingness parameters are not distinct, since $q$ depends on the discriminant parameter $\bm\beta=\bm\beta(\bm\theta)$. Consequently, the missing label mechanism cannot be ignored for fully efficient likelihood inference. 
The likelihood $L_{\rm PC}^{(\mathrm{ig})}$ omits the information about $\bm\beta$ contained in the realised missing label pattern.

\subsection{Missing label mechanism}\label{sec:mechanism}
Throughout this subsection, $\bm\beta=\bm\beta(\bm\theta)$.
For simplicity, we use the reduced notation $\tau_i(y;\bm\beta)$ and $q(y;\bm\beta,\bm\xi)$ whenever no ambiguity can arise.
The missing label mechanism is intended to assign larger missingness probabilities to observations that are difficult to classify.
Motivated by the entropy based missing label framework of
\citet{AhfockMcLachlan2020}, we use the Shannon entropy of the posterior class membership probabilities,
\begin{equation}\label{eq:entropy}
 \en(y;\bm\beta)=-\sum_{i=1}^2\tau_i(y;\bm\beta)\log\tau_i(y;\bm\beta).
\end{equation}
Since the posterior probabilities depend on $y$ only through $d(y;\bm\beta)$, we may also write
$$ \en(d) =-\tau(d)\log\tau(d)-\{1-\tau(d)\}\log\{1-\tau(d)\}, \quad \tau(d)=\operatorname{expit}(d).$$

\begin{lemma}[Entropy and its local expansion]\label{lem:entropy}
For finite $d$,
\begin{equation}\label{eq:entropyderivative}
 \en(-d)=\en(d),\quad \en'(d)=-d\,\tau(d)\{1-\tau(d)\}.
\end{equation}
Consequently, $\en(d)$ has the unique maximum $\en(0)=\log 2$, decreases strictly with $|d|$, and tends to zero as $|d|\to\infty$. Moreover, as $d\to0$,
\begin{align}
 \en(d)&=\log 2-\frac{d^2}{8}+O(d^4),\label{eq:entropyexpand}
\end{align}
\end{lemma}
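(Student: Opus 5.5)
The plan is to work directly with the one-variable function $\en(d)$, using only the logistic identities $\tau(-d)=1-\tau(d)$ and $\tau'(d)=\tau(d)\{1-\tau(d)\}$, both immediate from $\tau=\expit$.

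\emph{Symmetry.} Since $\tau(-d)=1-\tau(d)$, replacing $d$ by $-d$ simply swaps the two summands in the definition of $\en$. Hence $\en(-d)=\en(d)$.

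\emph{Derivative.} Differentiating,
\[
\en'(d)=-\tau'(d)\bigl[\log\tau(d)+1-\log\{1-\tau(d)\}-1\bigr]
=-\tau'(d)\log\frac{\tau(d)}{1-\tau(d)} .
\]
Because $\log[\tau/(1-\tau)]=d$ for the expit function, this gives $\en'(d)=-d\,\tau(d)\{1-\tau(d)\}$, which is \eqref{eq:entropyderivative}.

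\emph{Monotonicity, maximum and limits.} The factor $\tau(1-\tau)$ is strictly positive for finite $d$, so $\en'$ has the sign of $-d$. Thus $\en$ is strictly increasing on $(-\infty,0)$ and strictly decreasing on $(0,\infty)$, and by symmetry it is a strictly decreasing function of $|d|$. Its unique maximum is $\en(0)=-2\cdot\tfrac12\log\tfrac12=\log 2$.

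For the limit, let $d\to\infty$. Then $\tau\to1$, so $-\tau\log\tau\to0$. The other term also vanishes, because $u\log u\to0$ as $u\downarrow0$, applied with $u=1-\tau$. The case $d\to-\infty$ follows by symmetry.

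\emph{Expansion.} $\en$ is even and smooth (indeed analytic near $0$), so its Taylor series at $0$ contains only even powers. This gives
\[
\en(d)=\log 2+\tfrac12\en''(0)\,d^2+O(d^4).
\]
Differentiating \eqref{eq:entropyderivative} yields
\[
\en''(d)=-\tau(1-\tau)-d\,\tau'(d)(1-2\tau),
\]
so $\en''(0)=-\tfrac14$. The $d^2$ coefficient is therefore $-\tfrac18$, which gives \eqref{eq:entropyexpand}.

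The remainder is $O(d^4)$ rather than $O(d^3)$, and the only point needing care is to justify this. Evenness makes the third derivative at $0$ vanish, and smoothness gives a bounded fourth derivative on a neighbourhood of $0$. Taylor's theorem with Lagrange remainder to order four then closes the argument.
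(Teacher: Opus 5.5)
Your proof is correct. For the symmetry, the derivative formula, the monotonicity and maximum, and the limits, it follows the paper's argument essentially line for line.

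The only real difference is in the expansion. The paper rewrites $\en(d)=\log(1+e^d)-d\,\tau(d)$ and subtracts the explicit series $\log 2+d/2+d^2/8-d^4/192+O(d^6)$ and $d/2+d^2/4-d^4/48+O(d^6)$. This yields the sharper statement $\en(d)=\log 2-d^2/8+d^4/64+O(d^6)$.

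You instead use evenness to remove the odd Taylor terms. You then get $\en''(0)=-1/4$ by differentiating the formula for $\en'$ you already established, and control the remainder with Lagrange's form.

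Your route needs less computation and makes explicit why the remainder is $O(d^4)$ rather than $O(d^3)$. The paper's route also identifies the quartic coefficient. That coefficient is what one would need to quantify the error in the local squared-discriminant approximation to the entropy missingness mechanism.
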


\noindent The proof is given in Appendix~\ref{app:proof-entropy}.

The probability of a missing class label to be
\begin{equation}\label{eq:qentropy}
 q(y;\bm\beta,\bm\xi) =\expit\!\left[\xi_0+\xi_1\log \en\{ d(y;\bm\beta) \}\right], \quad \bm\xi=(\xi_0,\xi_1)^{\T},\quad \xi_1\ge0.
\end{equation}
Our favourable missingness analysis focuses on true values satisfying $\xi_1\geq0$, with $\xi_1=0$ corresponding to a constant missingness probability.

For $\xi_1>0$,
$$ \frac{\partial q}{\partial \en} =\frac{\xi_1}{\en}\,q(1-q)>0,$$
so a larger posterior entropy corresponds to a larger probability of a missing class label. If $r_0>1$, then $d(t_0)=0$ and $q(y;\bm\beta,\bm\xi)$ is uniquely maximised at the Bayes decision boundary, where
$$q(t_0;\bm\beta,\bm\xi) =\expit\{\xi_0+\xi_1\log(\log2)\}. $$
If $r_0\leq 1$, no interior decision boundary exists. 
In this case, when $\xi_1>0$, both $\en\{d(y;\bm\beta)\}$ and $q(y;\bm\beta,\bm\xi)$ decrease with $y$, and their suprema are approached as $y\downarrow0$.

The coefficient $\xi_1$ controls the strength and direction of the association between classification uncertainty and label missingness. In the favourable missingness regime, 
$$0<\en(d)\leq\log 2<1. $$
Consequently, increasing $\xi_1\geq0$ while holding $\xi_0$ fixed decreases $q(y;\bm\beta,\bm\xi)$ at every $y$, with a larger decrease away from the decision boundary. 
Thus, a fixed intercept comparison changes both the expected proportion and the location of the missing labels.

Substitution of \eqref{eq:entropyexpand} into the logit of \eqref{eq:qentropy} gives
\begin{equation}\label{eq:etalocal}
 \log\frac{q(y)}{1-q(y)}=\xi_0+\xi_1\log(\log2) 
 -\frac{\xi_1}{8\log2}d(y)^2+O\{d(y)^4\}.
\end{equation}
Hence the squared discriminant model
\begin{equation}\label{eq:qd2}
 q(y;\bm\beta,\widetilde{\bm\xi}) =\expit\left\{\widetilde\xi_0 +\widetilde\xi_1d(y;\bm\beta)^2\right\}, \quad \widetilde\xi_1<0,
\end{equation}
is the local second order representation of \eqref{eq:qentropy} near $d=0$, where
$$ \widetilde\xi_0=\xi_0+\xi_1\log(\log2), \quad
 \widetilde\xi_1=-\frac{\xi_1}{8\log2}.$$
This gives the local connection with the squared discriminant missing label model used by \citet{AhfockMcLachlan2020}.

\subsection{Parameterisation for inference on the discriminant coefficients} \label{subsec:reparameterisation}
The mixture parameter $\bm\theta$ has three components, whereas the Bayes' rule depends on the two discriminant coefficients in $\bm\beta$. For calculation of the Fisher information about $\bm\beta$, we reparameterize the exponential mixture by
\begin{equation}
 \nu=\log\mu_1,\quad \bm\kappa=(\nu,\bm\beta^{\T})^{\T}.
 \label{eq:reparametrisation}
\end{equation}
The inverse transformation is
\begin{equation}
 \mu_1=e^\nu, \quad \mu_2=\{e^{-\nu}+\beta_1\}^{-1},
 \quad \log\frac{\pi_1}{\pi_2}=\beta_0-\log\frac{\mu_2}{\mu_1}.
 \label{eq:inverse-reparametrisation}
\end{equation}
Under the identifying ordering $\mu_1<\mu_2$, the parameter space is
\begin{equation}
	\mathcal K =
	\left\{ (\nu,\beta_0,\beta_1): \nu\in\mathbb R,\  \beta_0\in\mathbb R,\ 
	-e^{-\nu}<\beta_1<0 \right\}.
	\label{eq:kappa-parameter-space}
\end{equation}
The transformation is one to one on $\mathcal K$. 
The conditional distribution of $Z$ given $Y$ depends on $\bm\beta$, whereas $\nu$ is the remaining nuisance parameter for inference on the discriminant coefficients. For subsequent inference, the full parameter vector is represented as
$$ \bm\Psi = (\bm\kappa^{\T},\bm\xi^{\T})^{\T}. $$

\subsection{Estimated Bayes' rules and their error rates}
\label{subsec:estimated-rules}
Let $\widehat{\bm\beta}_{\rm CC}$ denote the maximum likelihood
estimate of $\bm\beta$ obtained by maximising
$L_{\rm CC}(\bm\theta)$. Similarly, let
$\widehat{\bm\beta}_{\rm PC}^{(\mathrm{ig})}$ and
$\widehat{\bm\beta}_{\rm PC}^{(\mathrm{full})}$ denote the maximum
likelihood estimates obtained by maximising
$L_{\rm PC}^{(\mathrm{ig})}(\bm\theta)$ and
$L_{\rm PC}^{(\mathrm{full})}(\bm\Psi)$, respectively.

Also, let
$ \widehat R_{\rm CC},\widehat R_{\rm PC}^{(\mathrm{ig})},
\widehat R_{\rm PC}^{(\mathrm{full})} $ denote the estimated Bayes' rules obtained by substituting
$\widehat{\bm\beta}_{\rm CC}$, 
$\widehat{\bm\beta}_{\rm PC}^{(\mathrm{ig})}$, and
$\widehat{\bm\beta}_{\rm PC}^{(\mathrm{full})}$, respectively, for $\bm\beta$ in $R(y;\bm\beta)$.

For any of these estimators, denoted generically by $\widehat{\bm\beta}_s$, the conditional error rate of the
corresponding estimated Bayes' rule is
\begin{align}
	\err(\widehat{\bm\beta}_s;\bm\theta)
    =\pi_1 \Pr\{\widehat R_s(Y)=2 \mid Z=1,\widehat{\bm\beta}_s\}
     +\pi_2 \Pr\{\widehat R_s(Y)=1 \mid Z=2,\widehat{\bm\beta}_s\},
	\label{eq:conditionalerror}
\end{align}
where $s\in\{{\rm CC},{\rm PC}^{(\mathrm{ig})},
{\rm PC}^{(\mathrm{full})}\}$.
Here, $(Y,Z)$ denotes a future observation independent of the
training sample. The expected error rate is
$$ \E\{\err(\widehat{\bm\beta}_s;\bm\theta)\}, $$
where the expectation is taken with respect to the training sample
distribution.

For later use, if a fitted rule has a positive threshold $t$, its conditional error rate under the data-generating parameter $\bm\theta$ is
\begin{equation}
 \err(t;\bm\theta) = \pi_1e^{-t/\mu_1} +\pi_2\{1-e^{-t/\mu_2}\}.
 \label{eq:riskthreshold}
\end{equation}
If $\bm b=(b_0,b_1)^{\T}$ satisfies $b_0>0$ and $b_1<0$, the rule
$R(y;\bm b)$ has threshold
\begin{align*}
    t(\bm b)=-\frac{b_0}{b_1}. 
\end{align*}
and conditional error rate $\err\{t(\bm b);\bm\theta\}$. 
If $b_0\leq0$ and $b_1<0$, the rule allocates every observation to class $C_2$ and its error rate is $\pi_1$.

\section{Fisher information}\label{sec:information}

\subsection{Main result}\label{sec:main}
All Fisher information matrices in this section are defined per
observation unless stated otherwise. 
Write $ \bm x(y)=(1,y)^{\T} $ and let $\I_{\rm CC}(\bm\beta)$ denote the Fisher information for $\bm\beta$ under the completely classified experiment, after eliminating the nuisance coordinate $\nu$.

The unconditional probability of a missing label is
\begin{equation}
	\gamma(\bm\theta,\bm\xi) = \E\{q(Y;\bm\beta,\bm\xi)\}
	= \int_0^\infty q(y;\bm\beta,\bm\xi) f_Y(y;\bm\theta)\,dy. \label{eq:gamma}
\end{equation}
When the meaning is clear, we write $\gamma=\gamma(\bm\theta,\bm\xi)$. 
Provided $0<\gamma<1$, the feature density among observations of $Y$ with
missing labels is
\begin{equation}
	f_{Y\mid M=1}(y) 
    = \frac{q(y;\bm\beta,\bm\xi)f_Y(y;\bm\theta)}{\gamma}.
	\label{eq:missingfeaturedistribution}
\end{equation}

Define the conditional label information matrix among observations whose labels are missing by
\begin{equation}\label{eq:Iclr}
 \I_{\rm CC}^{({\rm clr})}(\bm\beta)
 =\E\!\left[\tau_1(Y)\tau_2(Y)\bm x(Y)\bm x(Y)^{\T}\mid M=1\right].
\end{equation}
The information lost through the unavailable labels is
\begin{equation}\label{eq:Dloss}
 \bm D(\bm\beta,\bm\xi)
 =\E\!\left[q(Y)\tau_1(Y)\tau_2(Y)\bm x(Y)\bm x(Y)^{\T}\right]
 =\gamma\I_{\rm CC}^{({\rm clr})}(\bm\beta).
\end{equation}
Let $\I_{\rm PC}^{({\rm miss})}(\bm\beta)$ denote the efficient Fisher information for $\bm\beta$ obtained from the likelihood based on the missing-label indicators, after accounting for the nuisance parameter $\bm\xi$. 

\begin{theorem}[Main result]\label{thm:decomp}
Suppose that the ordered exponential mixture and the missing label mechanism are correctly specified and identifiable. Assume that the true parameter is an interior point, that $0<\gamma<1$, that the nuisance information blocks required for the Schur complements are nonsingular, and that all required derivatives and expectations exist and are finite. Then the efficient Fisher information for $\bm\beta$ under the full partially classified likelihood is
\begin{equation}\label{eq:maindecomp}
 \I_{\rm PC}^{({\rm full})}(\bm\beta) =\I_{\rm CC}(\bm\beta)
 -\gamma\I_{\rm CC}^{({\rm clr})}(\bm\beta) +\I_{\rm PC}^{({\rm miss})}(\bm\beta).
\end{equation}
\end{theorem}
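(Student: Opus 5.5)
The plan is to compute the full per-observation Fisher information matrix for $\bm\Psi=(\nu,\bm\beta^{\T},\bm\xi^{\T})^{\T}$ directly from the factorisation \eqref{eq:fulllik}. I will then show that this matrix has a block structure in which the nuisance block for $(\nu,\bm\xi)$ is block diagonal. Once that holds, the Schur complement eliminating $(\nu,\bm\xi)$ splits into the three terms of \eqref{eq:maindecomp}.

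\textbf{Step 1: factorise the complete-data contribution.} In the parameterisation $\bm\kappa$ of \eqref{eq:reparametrisation}, I write a single complete-data contribution as
\[
\log f_Y(y;\bm\kappa)+\sum_{i=1}^2 z_i\log\tau_i(y;\bm\beta).
\]
The conditional part depends only on $\bm\beta$, and its score is $(z_1-\tau_1)\bm x(y)$ in $\bm\beta$ and zero in $\nu$. Conditioning on $Y$ kills the cross term with the marginal score. This gives
\[
\I_{\rm CC}^{\bm\kappa}=\I_Y(\bm\kappa)+\operatorname{blockdiag}\bigl(0,\;\E[\tau_1\tau_2\bm x\bm x^{\T}]\bigr).
\]

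\textbf{Step 2: the ignorable likelihood.} The contribution of \eqref{eq:iglik} is
\[
\log f_Y(y;\bm\kappa)+(1-m)\sum_i z_i\log\tau_i(y;\bm\beta).
\]
Using \eqref{eq:MAR}, so that $Z\perp M\mid Y$, the conditional score still has mean zero given $(Y,M)$, and again there is no cross term. Its information is therefore $\E[(1-q)\tau_1\tau_2\bm x\bm x^{\T}]$. Hence
\[
\I_{\rm PC}^{({\rm ig}),\bm\kappa}=\I_{\rm CC}^{\bm\kappa}-\operatorname{blockdiag}\bigl(0,\bm D\bigr),
\]
with $\bm D$ as in \eqref{eq:Dloss}, which equals $\gamma\I_{\rm CC}^{({\rm clr})}$ by \eqref{eq:missingfeaturedistribution}. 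The key point is that $\bm D$ sits only in the $\bm\beta\bm\beta$ block.

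\textbf{Step 3: orthogonality of the missingness score.} The score of \eqref{eq:missing-loglikelihood} is
\[
\{m-q(y)\}\,\partial\,\mathrm{logit}\,q/\partial(\bm\beta,\bm\xi).
\]
It has mean zero given $Y$, and it has zero component in $\nu$. The score of the ignorable part is the sum of a function of $Y$ alone and a term with mean zero given $(Y,M)$. Conditioning first on $Y$, and then on $(Y,M)$, shows that all cross products between the two scores have zero expectation. So the full information is the sum of $\I_{\rm PC}^{({\rm ig}),\bm\kappa}$, padded with zeros in $\bm\xi$, and the missingness information matrix $\bm M$. The matrix $\bm M$ is supported on $(\bm\beta,\bm\xi)$, with blocks $\bm M_{\beta\beta}$, $\bm M_{\beta\xi}$ and $\bm M_{\xi\xi}$.

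\textbf{Step 4: Schur complement.} Ordering the coordinates as $(\nu,\bm\beta,\bm\xi)$, the full information matrix is
\[
\begin{pmatrix} a_{\nu\nu} & \bm a_{\nu\beta} & 0\\ \bm a_{\beta\nu} & \bm A_{\beta\beta}-\bm D+\bm M_{\beta\beta} & \bm M_{\beta\xi}\\ 0 & \bm M_{\xi\beta} & \bm M_{\xi\xi}\end{pmatrix}.
\]
Here $a$ and $\bm A$ denote the blocks of $\I_{\rm CC}^{\bm\kappa}$. The nuisance block is $\operatorname{diag}(a_{\nu\nu},\bm M_{\xi\xi})$, which is nonsingular by assumption. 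The efficient information for $\bm\beta$ is therefore
\[
\bigl(\bm A_{\beta\beta}-\bm a_{\beta\nu}a_{\nu\nu}^{-1}\bm a_{\nu\beta}\bigr)-\bm D+\bigl(\bm M_{\beta\beta}-\bm M_{\beta\xi}\bm M_{\xi\xi}^{-1}\bm M_{\xi\beta}\bigr).
\]
The first bracket is $\I_{\rm CC}(\bm\beta)$ and the last is $\I_{\rm PC}^{({\rm miss})}(\bm\beta)$, which is exactly \eqref{eq:maindecomp}.

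\textbf{Main obstacle.} I expect the delicate point to be Steps 2 and 3. There I must justify carefully that the loss term $\bm D$ is confined to the $\bm\beta\bm\beta$ block, because $\nu$ enters only through $f_Y$. I must also check that every cross-information term vanishes under \eqref{eq:MAR}. These two facts are what make the nuisance block decouple, and they rely on the particular choice of $\bm\kappa$. The regularity assumptions are needed to interchange differentiation and integration and to write the information as the expected outer product of scores.
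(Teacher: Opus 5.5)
Your proposal is correct and follows the paper's proof essentially step for step. You use $M\perp Z\mid Y$ and the conditionally mean-zero scores to make the marginal-$Y$, observed-label and missingness-indicator information additive, note that the loss $\bm D=\gamma\I_{\rm CC}^{({\rm clr})}(\bm\beta)$ and the Bernoulli information $\bm B_{\rm miss}$ have no $\nu$ component, and take the Schur complement over the block-diagonal nuisance block $\diag(A_{11},B_{\xi\xi})$; you simply make explicit the score computations and cross-term cancellations that the paper asserts in one sentence.
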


In the notation for the information matrices, only the parameter about which the information is calculated is displayed in the argument. The matrices can also depend on nuisance parameters, including the parameters of the missing label mechanism.

The proof is given in Appendix~\ref{app:proof-decomp}.  
The weighting in \eqref{eq:Dloss} is essential because informative missingness changes the observations of $Y$ with missing labels.  In particular, an unavailable label near an interior decision boundary produces a relatively large conditional label information loss because $\tau_1(Y)\tau_2(Y)$ is then large.
This statement concerns the information lost through the unavailable class label, rather than the information contributed by the missing label itself.

The information loss term in \eqref{eq:maindecomp} is weighted by the conditional distribution of the observations whose labels are missing.
It cannot generally be replaced by
$$ \gamma\E\{\tau_1(Y)\tau_2(Y)\bm x(Y)\bm x(Y)^{\T}\}, $$
because observations with missing labels do not generally form a simple random subsample of the marginal feature distribution.
A sufficient condition for a gain in every direction is
\begin{equation}
	\I_{\rm PC}^{({\rm miss})}(\bm\beta)>\bm D(\bm\beta,\bm\xi),
	\label{eq:matrix-gain-condition}
\end{equation}
which implies
$$
\I_{\rm PC}^{({\rm full})}(\bm\beta)>\I_{\rm CC}(\bm\beta).
$$
This condition is stronger than necessary for improved classification. When the exponential Bayes' rule has an interior decision boundary, it depends on $\bm\beta$ through the scalar threshold $t_0$. The exact directional comparison relevant to the classification error is developed in Section~\ref{sec:are}.

\begin{corollary}[MCAR benchmark]\label{cor:mcar}
Suppose that the labels are missing completely at random, so that
\begin{equation}
 \Pr(M=1\mid Y=y,Z=i)=\gamma, \qquad 0<\gamma<1,
 \label{eq:mcar}
\end{equation}
for every $y>0$ and $i=1,2$, where $\gamma$ is variation independent
of $\bm\beta$. 
Then
$$ \I_{\rm PC}^{(\mathrm{miss})}(\bm\beta) = \bm0,$$
and
\begin{align}
 \I_{\rm PC}^{(\mathrm{full})}(\bm\beta)
 =\I_{\rm CC}(\bm\beta)-\gamma \E_{\bm\theta}\left[\tau_1(Y)\tau_2(Y)
 \bm x(Y)\bm x(Y)^\T \right] \leq \I_{\rm CC}(\bm\beta).
 \label{eq:mcar-information}
\end{align}
Under the ordered exponential mixture model, the matrix subtracted
in \eqref{eq:mcar-information} is positive definite, and hence the
inequality is strict whenever $\gamma>0$.
\end{corollary}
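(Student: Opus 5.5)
The plan is to specialise Theorem~\ref{thm:decomp} to the MCAR mechanism \eqref{eq:mcar} and then verify positive definiteness of the subtracted matrix directly for the exponential mixture.

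\emph{Step 1: the missing-label information vanishes.} Under \eqref{eq:mcar}, $q(y;\bm\beta,\bm\xi)\equiv\gamma$, so the log likelihood \eqref{eq:missing-loglikelihood} equals $\sum_j\{(1-m_j)\log(1-\gamma)+m_j\log\gamma\}$ and does not depend on $\bm\beta$. Its score with respect to $\bm\beta$ is identically zero. Because $\gamma$ is variation independent of $\bm\beta$, the joint information for $(\bm\beta,\gamma)$ from this factor is block diagonal with a zero $\bm\beta$ block. Hence the Schur complement gives $\I_{\rm PC}^{(\mathrm{miss})}(\bm\beta)=\bm0$.

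\emph{Step 2: the conditional weighting reduces to marginal weighting.} With $q\equiv\gamma$, \eqref{eq:missingfeaturedistribution} gives $f_{Y\mid M=1}=f_Y$. Therefore $\gamma\I_{\rm CC}^{({\rm clr})}(\bm\beta)=\bm D=\gamma\E_{\bm\theta}[\tau_1\tau_2\bm x\bm x^{\T}]$, and \eqref{eq:maindecomp} becomes \eqref{eq:mcar-information}.

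There is one subtlety. Theorem~\ref{thm:decomp} was stated for a mechanism parametrised by $\bm\xi$ in which $q$ depends on $\bm\beta$. The MCAR case should be read as the entropy mechanism with $\xi_1=0$, $\gamma=\expit(\xi_0)$. Alternatively, one can note that the proof of the decomposition goes through whenever $q$ is smooth in its parameters. Either way, the nonsingularity assumption on the nuisance block $\I_{\xi\xi}$ holds, since with $\xi_1$ free the information for $\xi_1$ is positive unless $\log\en(d(Y))$ is a.s.\ constant, which it is not. To avoid this issue, I would rather treat $\gamma$ alone as the nuisance parameter. The full-likelihood information then factorises as the ignorable part plus a $\gamma$-only part, the cross terms vanish, and $\I_{\rm PC}^{(\mathrm{full})}(\bm\beta)=\I_{\rm PC}^{(\mathrm{ig})}(\bm\beta)$. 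The latter equals $\I_{\rm CC}(\bm\beta)-\bm D$ by the same argument as in the theorem's proof. That argument rests on the Louis missing-information identity: the conditional covariance of the complete-data score given $Y$ for a missing label is $\tau_1\tau_2\bm x\bm x^{\T}$ in the $\bm\beta$ block. It also uses the fact that $\nu$ enters only the marginal of $Y$, whose information is unaffected by missing labels, so eliminating $\nu$ commutes with subtracting $\bm D$.

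\emph{Step 3: positive definiteness.} For $\bm a=(a_0,a_1)^{\T}\neq\bm0$,
\[
\bm a^{\T}\E[\tau_1\tau_2\bm x\bm x^{\T}]\bm a=\E[\tau_1(Y)\tau_2(Y)(a_0+a_1Y)^2].
\]
Since $0<\tau_1(y)\tau_2(y)$ for every $y>0$ by \eqref{eq:posterior-logistic}, and $(a_0+a_1y)^2$ vanishes at most at one point, the integrand is positive $f_Y$-almost everywhere. Here $f_Y>0$ on $(0,\infty)$, so the expectation is strictly positive. Finiteness follows from $\tau_1\tau_2\le 1/4$ and the finite second moment of the exponential mixture. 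Multiplying by $\gamma>0$ gives the strict inequality.

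The main obstacle is Step 2's justification that the Schur complement over the nuisance parameters ($\nu$ and $\gamma$) leaves the subtraction of $\gamma\E[\tau_1\tau_2\bm x\bm x^{\T}]$ intact. I would handle this by showing that the $\nu$–$\bm\beta$ cross information in the conditional label part is zero: the conditional of $Z$ given $Y$ involves only $\bm\beta$, so its score in $\nu$ vanishes. Both the complete-data and the partially classified informations then share the same marginal-$Y$ block, and it cancels in the difference.
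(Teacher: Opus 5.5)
Your proposal is correct and follows the paper's route: you specialise the decomposition of Theorem~\ref{thm:decomp} to a constant $q\equiv\gamma$. The $\bm\beta$-score of the missingness factor then vanishes, so $\I_{\rm PC}^{(\mathrm{miss})}(\bm\beta)=\bm 0$, and $f_{Y\mid M=1}=f_Y$ reduces $\gamma\I_{\rm CC}^{({\rm clr})}(\bm\beta)$ to $\gamma\E_{\bm\theta}[\tau_1\tau_2\bm x\bm x^{\T}]$. Your explicit bound $\E[\tau_1(Y)\tau_2(Y)(a_0+a_1Y)^2]>0$ supplies the positive-definiteness step that the paper states without argument, and treating $\gamma$ alone as the nuisance parameter cleanly avoids reading MCAR as the boundary point $\xi_1=0$ of the entropy model.
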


Thus, under MCAR, modelling the missing labels cannot recover any of the information lost through the unavailable labels.
Indeed, the full and ignorable likelihoods contain the same information about $\bm\beta$. 
The possibility that a partially classified sample is more informative than a completely classified sample therefore requires the missing label pattern to depend on the discriminant structure.

The following subsections give explicit expressions for the three
components of the decomposition, the complete classification
information, the conditionally weighted information loss and the
information contributed by the missing labels.

\subsection{Complete data information for the exponential mixture}
Under complete classification, the maximum likelihood estimators are
$$ \widehat\pi_1=\frac{n_1}{n}, \quad
\widehat\mu_i = \frac{\sum_{j=1}^nZ_{ij}Y_j}{n_i}, \quad
n_i=\sum_{j=1}^nZ_{ij}, \quad i=1,2. $$
Because the true parameter satisfies $\mu_1<\mu_2$ strictly, these estimators coincide with the maximum likelihood estimators under the ordering constraint with probability tending to one.
Let $\lambda=\log(\pi_1/\pi_2)$ and 
$\bm\theta_0=(\lambda,\mu_1,\mu_2)^{\T}$.  
Then
\begin{equation}\label{eq:thetacov}
 \sqrt n(\widehat{\bm\theta}_0-\bm\theta_0)
 \ \xrightarrow{d}\ N_3(\bm0,\bm V), \qquad
 \bm V=\diag\left\{\frac{1}{\pi_1\pi_2},\frac{\mu_1^2}{\pi_1},\frac{\mu_2^2}{\pi_2}\right\}.
\end{equation}
The Jacobian of $\bm \beta $ in \eqref{eq:beta-definition} is
\begin{equation}\label{eq:G}
 \bm G=\frac{\partial\bm\beta}{\partial\bm\theta_0^{\T}}
 =\begin{pmatrix}
     1 & -\mu_1^{-1} & \mu_2^{-1}\\
     0 &  \mu_1^{-2} & -\mu_2^{-2}
 \end{pmatrix}.
\end{equation}
Therefore
\begin{equation}\label{eq:J}
 \sqrt n(\widehat{\bm\beta}_{\rm CC}-\bm\beta)
 \xrightarrow{d} N_2(\bm0,\bm J),
 \qquad \bm J=\bm G\bm V\bm G^{\T},
\end{equation}
where
\begin{equation}\label{eq:Jexplicit}
 \bm J= \begin{pmatrix}
 \dfrac{2}{\pi_1\pi_2}
 &-\left(\dfrac{1}{\pi_1\mu_1}+\dfrac{1}{\pi_2\mu_2}\right)\\
 -\left(\dfrac{1}{\pi_1\mu_1}+\dfrac{1}{\pi_2\mu_2}\right)
 &\dfrac{1}{\pi_1\mu_1^2}+\dfrac{1}{\pi_2\mu_2^2}
 \end{pmatrix}.
\end{equation}
Hence
\begin{equation}\label{eq:IccJ}
 \I_{\rm CC}(\bm\beta)=\bm J^{-1}.
\end{equation}
This provides the complete classification benchmark used in the information comparison below.

\subsection{Conditionally weighted lost label information}
Using \eqref{eq:missingfeaturedistribution}, write
\begin{equation}\label{eq:dk}
 d_k=\gamma^{-1}\int_0^\infty y^k\tau_1(y)\tau_2(y)q(y)f_Y(y)\,dy,
 \qquad k=0,1,2.
\end{equation}
Then
\begin{equation}\label{eq:Iclrexplicit}
 \I_{\rm CC}^{({\rm clr})}(\bm\beta)
 =\begin{pmatrix}d_0&d_1\\d_1&d_2\end{pmatrix},
 \qquad
 \bm D=\gamma\begin{pmatrix}d_0&d_1\\d_1&d_2\end{pmatrix}.
\end{equation}
These integrals are finite because the exponential mixture has moments of every finite order and
$$ 0\leq q(y)\tau_1(y)\tau_2(y)\leq\frac14. $$

\subsection{Information in the missing labels}
For a logistic missing label mechanism with linear predictor
$$ q(y;\bm\beta,\bm\xi) = \expit \{\eta(y;\bm\beta,\bm\xi)\},$$
define
$$\bm a(y)=\frac{\partial\eta(y;\bm\beta,\bm\xi)}{\partial\bm \beta}, \quad 
\bm w(y)=\frac{\partial\eta(y;\bm\beta,\bm\xi)}{\partial\bm\xi}.$$
Let the information contributed by the Bernoulli likelihood of $M$
be partitioned as
$$ \bm B_{\rm miss}= \begin{pmatrix}
                        B_{\beta\beta} & B_{\beta\xi}\\
                        B_{\xi\beta}   & B_{\xi\xi}
                      \end{pmatrix}. $$
where 
\begin{align}
\bm B_{\beta\beta}&=\E[q(Y)\{1-q(Y)\}\bm a(Y)\bm a(Y)^\T], \label{eq:Bbb} \\
\bm B_{\beta\xi}&=\E[q(Y)\{1-q(Y)\}\bm a(Y)\bm w(Y)^\T],\label{eq:Bbx}\\
\bm B_{\xi\xi}&=\E[q(Y)\{1-q(Y)\}\bm w(Y)\bm w(Y)^\T].\label{eq:Bxx}
\end{align}
When $\bm\xi$ is unknown and $\bm B_{\xi\xi}$ is nonsingular, the
efficient Fisher information for $\bm\beta$ contributed by the
missing labels is
\begin{equation}\label{eq:Imiss}
 \I_{\rm PC}^{({\rm miss})}(\bm\beta)=\bm B_{\beta\beta}
 -\bm B_{\beta\xi}\bm B_{\xi\xi}^{-1}\bm B_{\xi\beta}\geq0.
\end{equation}

For the entropy model \eqref{eq:qentropy},
\begin{equation}\label{eq:logenprime}
 \frac{\partial}{\partial d}\log \en(d)=-\frac{d\,\tau_1(d)\tau_2(d)}{\en(d)}.
\end{equation}
It follows that
\begin{equation}\label{eq:awentropy}
 \bm a(y) =-\xi_1\frac{d(y)\tau_1(y)\tau_2(y)}{\en\{d(y)\}}\bm x(y),
 \quad \bm w(y)=\{1,\log \en(d(y))\}^{\T}.
\end{equation}
Substitution into \eqref{eq:Bbb}--\eqref{eq:Bxx} gives
\begin{align}
 \bm B_{\beta\beta}
 &=\xi_1^2\int_0^\infty q(y)\{1-q(y)\} \left\{\frac{d(y)\tau_1(y)\tau_2(y)}{\en\{d(y)\}}\right\}^{\!2}
 \bm x(y)\bm x(y)^{\T}f_Y(y)\,dy,\label{eq:Bbbentropy}\\
 \bm B_{\beta\xi}
 &=-\xi_1\int_0^\infty q(y)\{1-q(y)\} 
 \frac{d(y)\tau_1(y)\tau_2(y)}{\en\{d(y)\}}\bm x(y)
 \{1,\log \en(d(y))\}f_Y(y)\,dy,\label{eq:Bbxentropy}\\
 \bm B_{\xi\xi}
 &=\int_0^\infty q(y)\{1-q(y)\} \begin{pmatrix}
 1&\log \en(d(y))\\
 \log \en(d(y))&\{\log \en(d(y))\}^2
 \end{pmatrix}f_Y(y)\,dy.\label{eq:Bxxentropy}
\end{align}
For the squared discriminant model \eqref{eq:qd2},
the corresponding derivatives are
\begin{equation}\label{eq:awd2}
 \bm a(y)=2\widetilde\xi_1d(y)\bm x(y),\qquad
 \bm w(y)=\{1,d(y)^2\}^{\T}.
\end{equation}
The associated information matrices follow from \eqref{eq:Bbb}--\eqref{eq:Imiss} after replacing $q$, $\bm a$, $\bm w$, and $\bm\xi$ by their quadratic model counterparts.

\subsection{Asymptotic covariance of the full likelihood estimator}
\label{subsec:full-asymptotics}

The information decomposition also determines the asymptotic covariance of the full likelihood estimator. Let
$$ \bm\Psi_0 = (\nu_0,\bm\beta_0^\top,\bm\xi_0^\T)^\T$$
denote the true parameter value.

\begin{corollary}[Asymptotic distribution of the full likelihood
estimator] \label{cor:full-asymptotics}
Under the assumptions of Theorem~\ref{thm:decomp} and the standard regularity conditions for maximum likelihood estimation, suppose that $\bm\Psi_0$ is an interior point and that the full Fisher information matrix is nonsingular. 
Then a consistent local maximiser of \eqref{eq:fulllik} satisfies
\begin{equation}
 \sqrt n \left( \widehat{\bm\Psi}-\bm\Psi_0 \right)
 \xrightarrow{d} \mathcal N_5
 \left[ \bm0, \left\{ \I_{\rm PC}^{(\mathrm{full})}(\bm\Psi_0) \right\}^{-1}
 \right]. \label{eq:fullasymptotics}
\end{equation}
Consequently,
\begin{equation}
 \sqrt n \left( \widehat{\bm\beta}_{\rm PC}^{(\mathrm{full})} - \bm\beta_0 \right) \xrightarrow{d}
 \mathcal N_2 \left[ \bm0, \left\{ \I_{\rm PC}^{(\mathrm{full})}(\bm\beta_0) \right\}^{-1} \right],
 \label{eq:beta-full-asymptotics}
\end{equation}
where
$\I_{\rm PC}^{(\mathrm{full})}(\bm\beta_0)$ is the efficient
information obtained after eliminating the nuisance parameters
$(\nu,\bm\xi)$.
\end{corollary}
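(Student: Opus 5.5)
The corollary is the standard maximum likelihood limit theorem applied to the full partially classified likelihood, followed by a block-inversion argument to extract the $\bm\beta$ block. I would first note that, under \eqref{eq:MAR}, the observations $(Y_j,M_j,(1-M_j)Z_j)$ are independent and identically distributed with per-observation density given by \eqref{eq:observedcontribution}, written as a function of $\bm\Psi=(\nu,\bm\beta^{\T},\bm\xi^{\T})^{\T}$ through the reparameterisation \eqref{eq:inverse-reparametrisation}. The map from $\mathcal K$ to $\bm\theta$ is a smooth bijection, and $q$ is smooth in $(\bm\beta,\bm\xi)$ wherever $\en(d)>0$, which holds for every finite $d$ by Lemma~\ref{lem:entropy}. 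Hence the log density is twice continuously differentiable in a neighbourhood of the interior point $\bm\Psi_0$. I would then verify the usual Cram\'er conditions: the score has mean zero and covariance equal to the per-observation information $\I_{\rm PC}^{(\mathrm{full})}(\bm\Psi_0)$; differentiation under the integral is permitted; and the second derivatives admit a locally integrable envelope.

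For the envelope, I would use that every term is a polynomial in $y$ times bounded functions. The quantities $\tau_i$, $q$, $1-q$ and $d\,\tau_1\tau_2/\en(d)$ are all bounded, the last because $\en(d)\sim |d|e^{-|d|}$ while $d\,\tau_1\tau_2\sim |d|e^{-|d|}$. The term $\log\en(d)$ grows only linearly in $|d|$, hence linearly in $y$. Since all moments of the exponential mixture are finite, dominated convergence applies. This envelope step is the main technical obstacle. The hypothesis that all required expectations exist covers much of it, but I would check explicitly the behaviour of $\log\en$ and its derivatives as $y\to\infty$.

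Consistency of a local maximiser then follows from identifiability (assumed) by the standard Cram\'er argument. A Taylor expansion of the score around $\bm\Psi_0$, the central limit theorem for the score, and the law of large numbers for the observed information give \eqref{eq:fullasymptotics}, using nonsingularity of the full information.

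For \eqref{eq:beta-full-asymptotics}, I would take the $\bm\beta$ marginal of the five-dimensional normal limit. Its covariance is the $(\bm\beta,\bm\beta)$ block of $\{\I_{\rm PC}^{(\mathrm{full})}(\bm\Psi_0)\}^{-1}$. Partitioning the full information into the $\bm\beta$ block and the nuisance block $(\nu,\bm\xi)$, the block-inversion formula shows that this covariance equals the inverse of the Schur complement
\[
\I_{\bm\beta\bm\beta}-\I_{\bm\beta\eta}\I_{\eta\eta}^{-1}\I_{\eta\bm\beta},\qquad \eta=(\nu,\bm\xi^{\T})^{\T}.
\]
The nuisance block $\I_{\eta\eta}$ is nonsingular by assumption, and the Schur complement is by definition the efficient information $\I_{\rm PC}^{(\mathrm{full})}(\bm\beta_0)$. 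This is exactly the object decomposed in Theorem~\ref{thm:decomp}, which completes the argument.
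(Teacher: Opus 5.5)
Your proposal is correct and follows the same route as the paper: asymptotic normality of the maximum likelihood estimator for the joint parameter $(\nu,\bm\beta^{\T},\bm\xi^{\T})^{\T}$, then block inversion, so that the $\bm\beta$ block of the inverse information is the inverse of the Schur complement after eliminating $(\nu,\bm\xi)$. The paper simply cites ``standard regularity conditions''; your envelope check fills in that step. It rests on the boundedness of $d\,\tau_1\tau_2/\en(d)$ and the linear growth of $\log\en(d)$, both of which hold.
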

The first result follows from the standard asymptotic normality of the maximum likelihood estimator. The second follows by taking the $\bm\beta$ block of the inverse joint information matrix, which is the inverse of the efficient information obtained after eliminating $(\nu,\bm\xi)$.

\section{Asymptotic relative efficiencies of estimated Bayes' rules} \label{sec:are}
\subsection{Expected excess error and asymptotic relative efficiency} \label{subsec:expected-error}

The relative value of the partially classified sample to the completely classified sample can be measured by comparing the expected excess error rates of the estimated Bayes' rules. 
In terms of the notation introduced in Section~\ref{subsec:estimated-rules}, the relative efficiency of $\widehat R_{\rm PC}^{(\mathrm{full})}$ compared with $\widehat R_{\rm CC}$ is
\begin{equation}\label{eq:REdef}
 \RE\!\left(\widehat R_{\rm PC}^{(\mathrm{full})}\right)
 =\frac{\E\{\err(\widehat{\bm\beta}_{\rm CC};\bm\theta)\}
 -\err(\bm\theta)}{ \E\{\err(\widehat{\bm\beta}_{\rm PC}^{(\mathrm{full})};\bm\theta)\} -\err(\bm\theta)}.
\end{equation}
The corresponding asymptotic relative efficiency (ARE) is
\begin{equation}\label{eq:AREdef}
 \ARE\!\left(\widehat R_{\rm PC}^{(\mathrm{full})}\right)
 =\frac{\operatorname{AE}\{\err(\widehat{\bm\beta}_{\rm CC};\bm\theta)\}
 -\err(\bm\theta)}{\operatorname{AE}\{\err(\widehat{\bm\beta}_{\rm PC}^{(\mathrm{full})};\bm\theta)\} -\err(\bm\theta)},
\end{equation}
where $\operatorname{AE}\{\err(\widehat{\bm\beta};\bm\theta)\}$ denotes the expansion of the expected error rate through the first order term in $n^{-1}$.

We now derive the leading term in the expected excess error. 
Suppose that $A>1$, so that the true Bayes' rule has the interior decision boundary $t_0$ given in \eqref{eq:t0}.
For a fixed data generating parameter $\bm\theta$, according to \eqref{eq:riskthreshold}, 
differentiation gives
$$ \err'(t) = -\pi_1f_1(t;\mu_1) + \pi_2f_2(t;\mu_2). $$
At the Bayes boundary,
$$ \pi_1f_1(t_0;\mu_1) = \pi_2f_2(t_0;\mu_2),$$
and hence $\err'(t_0)=0$. Moreover,
\begin{equation}
 \err''(t_0) = c_0 = \pi_1f_1(t_0;\mu_1)
 \left( \frac{1}{\mu_1}-\frac{1}{\mu_2} \right)>0. \label{eq:Rsecond}
\end{equation}
The gradient of the decision boundary with respect to the discriminant coefficients is
\begin{equation}\label{eq:gthreshold}
 \bm g_t =\frac{\partial t(\bm\beta)}{\partial\bm\beta}
 =\begin{pmatrix}-1/\beta_1\\ \beta_0/\beta_1^2\end{pmatrix}
 =-\frac1{\beta_1} \begin{pmatrix}1\\t_0\end{pmatrix}.
\end{equation}

\begin{theorem}[Expected excess error of an estimated Bayes rule] \label{thm:risk}
Suppose that $\mu_1<\mu_2$ and $A>1$. Let $\widehat{\bm\beta}$ satisfy
\begin{equation}
 \sqrt n \left( \widehat{\bm\beta}-\bm\beta \right)
 \xrightarrow{\mathcal{D}}\mathcal N_2\{\bm0,\I(\bm\beta)^{-1}\}. \label{eq:asymnormal}
\end{equation}
and the required second moments are uniformly integrable. 
Then
\begin{equation}\label{eq:expectedexcess}
 \E\{\err(\widehat{\bm\beta};\bm\theta)\} -\err(\bm\theta)
 =\frac{c_0}{2n} \bm g_t^{\T}\I(\bm\beta)^{-1}\bm g_t +o(n^{-1}).
\end{equation}
\end{theorem}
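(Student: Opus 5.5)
The plan is a second-order delta-method argument applied to the composite map $\bm b\mapsto \err\{t(\bm b);\bm\theta\}$. Here $A>1$ is read as $r_0>1$, so that $\beta_0>0$, $\beta_1<0$ and $t_0>0$ is interior.

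\emph{Step 1: reduce to the threshold.} Since $\widehat{\bm\beta}\to\bm\beta$ in probability and $\beta_0>0$, $\beta_1<0$, the event $E_n=\{\widehat\beta_0>0,\ \widehat\beta_1<0\}$ has probability tending to one. On $E_n$ the fitted rule is the threshold rule with $\hat t=t(\widehat{\bm\beta})=-\widehat\beta_0/\widehat\beta_1$, and its conditional error is $\err(\hat t;\bm\theta)$ by \eqref{eq:riskthreshold}. Off $E_n$ the conditional error is bounded by $1$.

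For the complement I would show $\Pr(E_n^c)=o(n^{-1})$. This follows from a moderate-deviation or Chebyshev-type bound, using the uniform integrability of $n\|\widehat{\bm\beta}-\bm\beta\|^2$: for fixed $\epsilon>0$, $\Pr(\|\widehat{\bm\beta}-\bm\beta\|>\epsilon)\le \E\{n\|\widehat{\bm\beta}-\bm\beta\|^2\mathbb 1(\cdot)\}/(n\epsilon^2)=o(n^{-1})$. The same bound lets me restrict attention to a neighbourhood $\|\widehat{\bm\beta}-\bm\beta\|\le\epsilon$ on which $t(\cdot)$ is smooth and bounded away from $0$.

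\emph{Step 2: Taylor expansion.} The function $\err(t;\bm\theta)$ is $C^\infty$ in $t$ with bounded derivatives on $(0,\infty)$, and by \eqref{eq:Rsecond} it satisfies $\err'(t_0)=0$ and $\err''(t_0)=c_0$. Expanding,
\[
\err(\hat t;\bm\theta)-\err(\bm\theta)=\tfrac{c_0}{2}(\hat t-t_0)^2+O(|\hat t-t_0|^3).
\]
On the neighbourhood, $\hat t-t_0=\bm g_t^{\T}(\widehat{\bm\beta}-\bm\beta)+O(\|\widehat{\bm\beta}-\bm\beta\|^2)$ by \eqref{eq:gthreshold}. Hence
\[
n\{\err(\hat t)-\err(\bm\theta)\}=\tfrac{c_0}{2}\{\bm g_t^{\T}\sqrt n(\widehat{\bm\beta}-\bm\beta)\}^2+\sqrt n\,O_p(\|\widehat{\bm\beta}-\bm\beta\|)\cdot O_p(1),
\]
and the remainder is $o_p(1)$ times a term dominated by $n\|\widehat{\bm\beta}-\bm\beta\|^2$.

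\emph{Step 3: pass to expectations.} By \eqref{eq:asymnormal} and the continuous mapping theorem, $\{\bm g_t^{\T}\sqrt n(\widehat{\bm\beta}-\bm\beta)\}^2$ converges in distribution to a variable with mean $\bm g_t^{\T}\I(\bm\beta)^{-1}\bm g_t$. Because the excess error is nonnegative and bounded on the neighbourhood by a constant times $\|\widehat{\bm\beta}-\bm\beta\|^2$, the uniform integrability assumption upgrades convergence in distribution to convergence of means. The remainder term has expectation $o(1)$ after multiplication by $n$, by the same domination. Combining this with the $o(n^{-1})$ contribution from the excluded events gives \eqref{eq:expectedexcess}.

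\emph{Main obstacle.} The delicate step is the rigorous control of the excluded events and of the remainder in expectation. Convergence in distribution alone does not give convergence of $n\,\E(\cdot)$, so everything hinges on the uniform integrability hypothesis. I would first try to bound the excess error globally by $C\min\{1,\|\widehat{\bm\beta}-\bm\beta\|^2\}$, which holds because $\err$ is bounded and locally quadratic. Uniform integrability of $n\|\widehat{\bm\beta}-\bm\beta\|^2$ then gives everything at once via dominated convergence in the Vitali form.
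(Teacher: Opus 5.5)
Your proposal is correct and follows the same route as the paper: apply the delta method to $\hat t=t(\widehat{\bm\beta})$, expand $\err(t;\bm\theta)$ to second order about $t_0$ using $\err'(t_0)=0$ and $\err''(t_0)=c_0$, and take expectations. Your version is more careful than the paper's sketch, which passes directly from an $o_p(n^{-1})$ expansion to expectations. You use uniform integrability of $n\|\widehat{\bm\beta}-\bm\beta\|^2$ to show that the non-threshold event has probability $o(n^{-1})$ and to justify convergence of means, which supplies the steps the paper leaves implicit.
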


\noindent The proof is given in Appendix~\ref{app:proof-error-rate}.

It follows from Theorem~\ref{thm:risk} that
\begin{equation}\label{eq:ARE}
 \ARE\!\left(\widehat R_{\rm PC}^{(\mathrm{full})}\right)
 = \frac{ \bm g_t^{\T}\I_{\rm CC}(\bm\beta)^{-1}\bm g_t}
 {\bm g_t^{\T}\I_{\rm PC}^{(\mathrm{full})}(\bm\beta)^{-1}\bm g_t}.
\end{equation}
Thus, an ARE greater than one means that the full likelihood rule has a smaller first order asymptotic expected excess error than the rule based on the completely classified sample. Condition
\eqref{eq:matrix-gain-condition} is sufficient, whereas the exact
directional condition is
\begin{equation}\label{eq:directional}
 \bm g_t^{\T} \I_{\rm PC}^{(\mathrm{full})}(\bm\beta)^{-1}\bm g_t
 < \bm g_t^{\T} \I_{\rm CC}(\bm\beta)^{-1}\bm g_t.
\end{equation}

\subsection{Sensitivity of the missing label mechanism} \label{sec:nonmonotone}

Consider the entropy model \eqref{eq:qentropy} with fixed $\xi_0$ and $\xi_1\geq0$. Since $\log\en(d)<0$, increasing $\xi_1$ decreases the missingness probability at every feature value. However, the decrease is more pronounced in regions of low posterior entropy, so the missing labels become relatively more concentrated in regions of high classification uncertainty.
For notational convenience, let
$$ \ARE(\xi_1)=\ARE\!\left(\widehat R_{\rm PC}^{(\mathrm{full})};\xi_1 \right), $$
and write $q_0=\expit(\xi_0)$. 
Then we have the following theorem.
\begin{theorem}[Fixed intercept efficiency behaviour] \label{thm:optimum}
	Fix a finite $\xi_0$. Suppose that 
    $\I_{\rm PC}^{(\mathrm{full})}(\bm\beta;\xi_1)$ is positive definite and continuous for every finite $\xi_1\geq0$.
	Then,
	\begin{enumerate}[label=(\roman*)]
		\item[(i)] At $\xi_1=0$, the missingness probability is the constant $q_0$, and the missing labels contain no information about $\bm\beta$. 
        Under the ordered exponential mixture, $\ARE(0)<1$ because the conditional label information matrix is positive definite.
		
		\item[(ii)] As $\xi_1\to\infty$,
		$$ \I_{\rm PC}^{(\mathrm{full})}(\bm\beta;\xi_1) \longrightarrow \I_{\rm CC}(\bm\beta), \quad\text{and}\quad  \ARE(\xi_1)\longrightarrow1. $$
		
		\item[(iii)] If $\ARE(\widetilde\xi_1)>1$ for some finite $\widetilde\xi_1$, then $\ARE(\xi_1)$ is non-monotone and attains a maximum greater than one at some finite $\xi_1^*>0$.
	\end{enumerate}
\end{theorem}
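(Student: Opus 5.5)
The plan is to prove the three parts in order, using the decomposition of Theorem~\ref{thm:decomp} throughout and the directional form \eqref{eq:ARE} of the ARE.

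\emph{Part (i).} At $\xi_1=0$ the mechanism \eqref{eq:qentropy} reduces to the constant $q(y)=q_0$. This is the MCAR setting of Corollary~\ref{cor:mcar} with $\gamma=q_0$. One caveat: $\bm w(y)=\{1,\log\en(d(y))\}^{\T}$ still depends on $\bm\beta$ through the second coordinate. I will handle this directly from \eqref{eq:awentropy}. Since $\bm a(y)=\bm0$ when $\xi_1=0$, we get $\bm B_{\beta\beta}=\bm0$ and $\bm B_{\beta\xi}=\bm0$, so $\I_{\rm PC}^{(\mathrm{miss})}(\bm\beta)=\bm0$ by \eqref{eq:Imiss}. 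Theorem~\ref{thm:decomp} then gives
$$\I_{\rm PC}^{(\mathrm{full})}=\I_{\rm CC}-q_0\E[\tau_1\tau_2\bm x\bm x^{\T}].$$
The subtracted matrix is positive definite by Corollary~\ref{cor:mcar}, so $\I_{\rm PC}^{(\mathrm{full})}<\I_{\rm CC}$ in the Loewner order. Matrix inversion reverses this order, so $\I_{\rm PC}^{(\mathrm{full})\,-1}>\I_{\rm CC}^{-1}$. Because $\bm g_t\neq\bm0$ by \eqref{eq:gthreshold}, the quadratic forms in \eqref{eq:ARE} are strictly ordered, which gives $\ARE(0)<1$.

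\emph{Part (ii).} Fix $\xi_0$. Since $\log\en(d)\le\log\log2<0$ for every $y$, we have $\xi_0+\xi_1\log\en\{d(y)\}\to-\infty$ pointwise, so $q(y)\to0$ for every $y$. Then:
\begin{itemize}
\item The loss term $\bm D=\E[q\tau_1\tau_2\bm x\bm x^{\T}]$ tends to $\bm0$ by dominated convergence, with dominating function $\tfrac14(1+y^2)$, which is integrable under the mixture.
\item For the miss term, note $\I_{\rm PC}^{(\mathrm{miss})}\le\bm B_{\beta\beta}$. Each entry of $\bm B_{\beta\beta}$ is bounded by
$$\xi_1^2\,q\,\{d\tau_1\tau_2/\en\}^2(1+y^2).$$
\end{itemize}
The main obstacle is showing that $\xi_1^2 q(y)\to0$ in a dominated way. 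Where $\en$ is small, $\log\en$ is very negative and $q\approx e^{\xi_0}\en^{\xi_1}$. Hence
$$\xi_1^2 q\,(d\tau_1\tau_2/\en)^2\le C\,\xi_1^2\,\en^{\xi_1-2}\,(d\tau_1\tau_2)^2 .$$
Since $\en\le\log2<1$, the factor $\xi_1^2(\log 2)^{\xi_1}\to0$. For $\xi_1\ge2$ this gives a uniform bound $\xi_1^2(\log2)^{\xi_1-2}\sup(d\tau_1\tau_2)^2\to0$, which is enough.

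So both $\bm D$ and $\I_{\rm PC}^{(\mathrm{miss})}$ vanish, and $\I_{\rm PC}^{(\mathrm{full})}\to\I_{\rm CC}$ by Theorem~\ref{thm:decomp}. Matrix inversion is continuous at the positive definite limit, so $\ARE(\xi_1)\to1$ by \eqref{eq:ARE}.

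\emph{Part (iii).} By the assumed continuity of $\I_{\rm PC}^{(\mathrm{full})}$ in $\xi_1$ and its positive definiteness, $\ARE(\xi_1)$ is a continuous function on $[0,\infty)$, with $\ARE(0)<1$ and $\ARE(\xi_1)\to1$ as $\xi_1\to\infty$. Suppose $\ARE(\widetilde\xi_1)>1$, and set $\epsilon=\ARE(\widetilde\xi_1)-1$. Choose $K>\widetilde\xi_1$ such that $\ARE<1+\epsilon$ on $[K,\infty)$. On the compact interval $[0,K]$, continuity gives a maximiser $\xi_1^*$, and its value is at least $\ARE(\widetilde\xi_1)>1$. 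This is therefore a global maximum over $[0,\infty)$.

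The maximiser satisfies $\xi_1^*>0$ because $\ARE(0)<1$. Non-monotonicity follows because the function rises from below $1$ to above $1$ and then returns to $1$ in the limit.
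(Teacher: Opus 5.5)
Your proposal is correct and follows essentially the same route as the paper's proof: Theorem~\ref{thm:decomp} with $\I_{\rm PC}^{(\mathrm{miss})}(\bm\beta)=\bm0$ at $\xi_1=0$ for part~(i), $q\to0$ together with $\I_{\rm PC}^{(\mathrm{miss})}(\bm\beta)\le\bm B_{\beta\beta}$ (and the nonnegativity in \eqref{eq:Imiss}) for part~(ii), and a continuity argument on a compact interval $[0,K]$ for part~(iii). The differences are cosmetic. You verify $\I_{\rm PC}^{(\mathrm{miss})}=\bm0$ via $\bm a(y)=\bm0$, which sidesteps the $\bm\beta$-dependence of $\bm w$. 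In part~(ii) you use $q\le e^{\xi_0}\en^{\xi_1}$ to absorb the $\en^{-2}$ factor; this bound holds exactly via $\expit(u)\le e^u$, not just where $\en$ is small. The paper instead uses the uniform bound $q\le e^{\xi_0}(\log2)^{\xi_1}$ together with the boundedness of $d\,\tau_1\tau_2/\en(d)$.
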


\noindent The proof is given in Appendix~\ref{app:proof-optimum}. Under the restriction $\xi_1\geq0$, the value $\xi_1=0$ is interpreted as the MCAR information benchmark, the usual interior point likelihood asymptotics apply when $\xi_1>0$.

A complementary comparison fixes the expected proportion of missing labels at $\gamma_0\in(0,1)$. 
For each finite $\xi_1$, the expectation in
\begin{equation}\label{eq:fixedgamma}
\E\left[\expit\{\xi_0(\xi_1;\gamma_0)+\xi_1\log\en(d(Y))\} \right]
=\gamma_0
\end{equation}
is continuous and strictly increasing in $\xi_0$. Hence there is a unique calibrated intercept $\xi_0(\xi_1;\gamma_0)$. 
This calibration is used only to select population parameter settings and it is not imposed as a constraint when calculating the information matrix. The large $\xi_1$ conclusion in Theorem~\ref{thm:optimum} does not apply because $\xi_0(\xi_1;\gamma_0)$ changes with $\xi_1$.

\subsection{Numerical evaluation of the asymptotic relative efficiency}\label{sec4.3}

We numerically evaluate the population information matrices and the resulting asymptotic relative efficiency (ARE) for the reference mixture
$$ \pi_1=\pi_2=\frac12,\qquad \mu_1=1,\qquad \mu_2=2,$$
for which
$$\bm\beta=(\log 2,-1/2)^{\T},\quad t_0=2\log 2,\quad
\err(\bm\theta)=0.375. $$
Expectations under each exponential component are computed using a
200-node Gauss-Laguerre rule. 
Increasing the order to 300 changes all reported ARE values by less than $10^{-4}$. 
We use the entropy missingness model \eqref{eq:qentropy} and eliminate the missingness parameters as in \eqref{eq:Imiss}. 
The intercept calibration is used only to select parameter settings having the prescribed missing label proportion.

To express the comparison on the scale governing the asymptotic excess error, define
$$V_s=\bm g_t^{\T}\I_s(\bm\beta)^{-1}\bm g_t,
\quad s\in\{{\rm CC},{\rm IG},{\rm Full}\}.$$
At $\gamma=0.30$, Figure~\ref{fig:information-tradeoff} decomposes the comparison into the label loss $V_{\rm IG}-V_{\rm CC}$, the gain from the missing-label indicators $V_{\rm IG}-V_{\rm Full}$, and the net gain $V_{\rm CC}-V_{\rm Full}$.  The net gain becomes positive at about $\xi_1=3.8$, where the ARE of the Full rule crosses one.

\begin{figure}[!ht]
\centering
\includegraphics[width=0.9\linewidth]{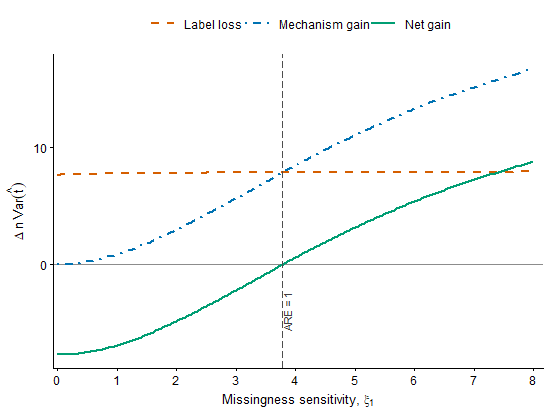}
\caption{Information trade off for the reference exponential mixture at fixed missing label proportion $\gamma=0.30$.  
The curves show the label loss $V_{\rm IG}-V_{\rm CC}$, mechanism gain $V_{\rm IG}-V_{\rm Full}$, and net gain $V_{\rm CC}-V_{\rm Full}$, where $V_s=\bm g_t^{\T}I_s^{-1}\bm g_t$.  
The vertical dashed line marks $\ARE\!\left(\widehat R_{\rm PC}^{(\mathrm{full})}\right)=1$.} \label{fig:information-tradeoff}
\end{figure}

Figure~\ref{fig:are-region} extends this calculation over the $(\xi_1,\gamma)$ plane.  For small $\xi_1$, the information in the missing-label indicators is insufficient to compensate for the unavailable class labels, so ARE is below one.  
The crossing value of $\xi_1$ increases with $\gamma$: a larger missing label proportion therefore requires a stronger dependence of missingness on classification uncertainty before the Full rule becomes more efficient than CC.  At $\gamma=0.30$, the crossing is again approximately $\xi_1=3.8$.

\begin{figure}[t]
\centering
\includegraphics[width=0.90\linewidth]{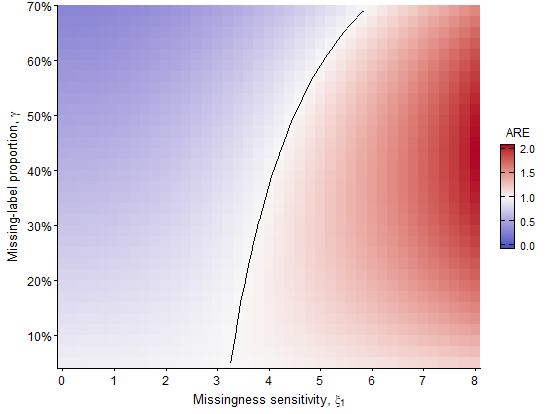}
\caption{Asymptotic relative efficiency of the Full rule over the missingness sensitivity parameter $\xi_1$ and expected missing label proportion $\gamma$.  
For each $(\xi_1,\gamma)$, the intercept $\xi_0(\xi_1;\gamma)$ is chosen to satisfy \eqref{eq:fixedgamma}.  
The black contour marks ARE$=1$; to its right, the Full rule has the smaller first order asymptotic expected excess error.}
\label{fig:are-region}
\end{figure}

These are first order comparisons of the population. Section~\ref{sec:simulations} examines their finite sample counterparts.

\section{Numerical experiments}\label{sec:simulations}

We assess the finite sample behaviour of the reference model used in Section~\ref{sec4.3}.  
Training sample sizes are $$n\in\{500,1000,1500\},$$
and each parameter setting is replicated independently $B=500$ times.  
Labels are removed according to
\begin{equation}\label{eq:simulation-missingness}
\Pr(M_j=1\mid Y_j=y)
=\expit\!\left[\xi_0+\xi_1\log\en\{d(y;\bm\beta)\}\right],
\end{equation}
where
$$\xi_1\in\{0,2,4,6,8\},\quad \gamma\in\{0.10,0.30,0.50\}.$$
For each $(\gamma,\xi_1)$, $\xi_0$ is determined by the fixed $\gamma$ calibration in \eqref{eq:fixedgamma}.  
Thus the expected missing label proportion is held fixed while $\xi_1$ controls the concentration of missing labels among observations with high posterior uncertainty. The case $\xi_1=0$ is the MCAR benchmark. The design comprises $45$ settings and $22{,}500$ simulated training samples.

\subsection{Estimation and performance measure} \label{subsec:simulation-measures}

For each simulated sample, we fit three classifiers. CC is based on the complete classification likelihood $L_{\rm CC}$. IG is based on $L_{\rm PC}^{(\mathrm{ig})}$ in \eqref{eq:iglik}, whereas Full is based on $L_{\rm PC}^{(\mathrm{full})}$ in \eqref{eq:fulllik}.   
The ordering $\mu_1<\mu_2$ is imposed during optimisation.  Likelihoods are evaluated on the log scale, multiple starting values are used for the partially classified fits, and the converged solution with the largest observed log likelihood is retained.  
The same generated complete sample is used for all three methods within each replication, so the comparisons are paired.

Classification risk is evaluated exactly under the data generating mixture.  
If $\widehat\beta_0>0$ and $\widehat\beta_1<0$, the estimated threshold is
$$ \widehat t=-\frac{\widehat\beta_0}{\widehat\beta_1},$$
with error rate
\begin{equation}\label{eq:simulation-risk}
\err(\widehat t;\bm\theta) =\pi_1\exp(-\widehat t/\mu_1)
+\pi_2\{1-\exp(-\widehat t/\mu_2)\}.
\end{equation}
If $\widehat\beta_0\le0$, the corresponding constant rule is used.

For method $s\in\{\mathrm{CC},\mathrm{IG},\mathrm{Full}\}$, let
$$ L_s^{(b)} =\err\{\widehat{\bm\beta}^{(b)}_s;\bm\theta\}
-\err(\bm\theta) $$
denote the excess classification error in replication $b$, and define $$ \overline E_s=\frac1B\sum_{b=1}^B L_s^{(b)}. $$
The empirical relative efficiencies are
\begin{equation}\label{eq:empirical-re}
\widehat{\RE}_{\mathrm{IG}:\mathrm{CC}}
=\frac{\overline E_{\rm CC}}{\overline E_{\rm IG}},
\qquad
\widehat{\RE}_{\mathrm{Full}:\mathrm{CC}}
=\frac{\overline E_{\rm CC}}{\overline E_{\rm Full}}.
\end{equation}
Values above one favour the partially classified procedure.  Monte Carlo standard errors (MCSEs) are obtained by a paired delta method.  For the Full estimator, the population benchmark shown in Figure~\ref{fig:finite-re-grid} is
$$\ARE_{\mathrm{Full}:\mathrm{CC}} =\frac{V_{\rm CC}}{V_{\rm Full}},$$
with $V_s$ defined in Section~\ref{sec4.3}.

\subsection{Results}\label{subsec:simulation-results}

Table~\ref{tab:empirical-re} reports the empirical relative efficiencies. Under the MCAR benchmark, IG and Full coincide and are less efficient than CC.  For example, at $\gamma=0.50$ the common RE is $0.519$, $0.533$, and $0.477$ for $n=500$, $1000$, and $1500$, respectively.

For informative missingness, as $\xi_1$ increases, IG remains below one throughout the reported settings, whereas Full can outperform CC. At $\gamma=0.10$ and $\xi_1=4$, the Full relative efficiencies are $1.033$, $1.028$ and $1.067$ for $n=500$, $1000$, and $1500$, respectively. At $\xi_1=8$, they increase to $1.320$, $1.383$, and $1.437$.

The transition to $\widehat{\RE}_{\mathrm{Full}:\mathrm{CC}}>1$ occurs at larger $\xi_1$ as $\gamma$ increases.  For $\gamma=0.30$, $\xi_1=4$ lies close to the transition, with Full RE $0.992$, $0.975$, and $1.065$, while at $\xi_1=6$ the corresponding values are $1.385$, $1.353$, and $1.464$.  For $\gamma=0.50$, Full remains below one at $\xi_1=4$ and takes values $0.960$, $1.407$, and $1.304$ at $\xi_1=6$.  Thus, when a larger proportion of labels is missing, a stronger dependence of missingness on classification uncertainty is required before the information carried by the missingness pattern offsets the loss of class labels.

\begin{table*}[htbp]
\centering
\caption{Empirical relative efficiencies based on $B=500$ Monte
Carlo replications, with MCSEs in parentheses. Values greater than one favour the partially classified procedure over CC. 
At $\xi_1=0$, the Full column reports the constrained MCAR benchmark and therefore coincides with IG.}
\label{tab:empirical-re}
\setlength{\tabcolsep}{4pt}
\begin{tabular}{cc cc cc cc}
\toprule
& & \multicolumn{2}{c}{$n=500$} & \multicolumn{2}{c}{$n=1000$} &
\multicolumn{2}{c}{$n=1500$}\\
\cmidrule(lr){3-4}\cmidrule(lr){5-6}\cmidrule(lr){7-8}
$\gamma$ & $\xi_1$
& $\widehat{\RE}_{\rm IG}$ & $\widehat{\RE}_{\rm Full}$
& $\widehat{\RE}_{\rm IG}$ & $\widehat{\RE}_{\rm Full}$
& $\widehat{\RE}_{\rm IG}$ & $\widehat{\RE}_{\rm Full}$\\
\midrule
0.10 & 0 & 0.884 (0.030) & 0.884 (0.030) & 0.982 (0.028) & 0.982 (0.028) & 0.883 (0.024) & 0.883 (0.024)\\
0.10 & 2 & 0.871 (0.026) & 0.939 (0.039) & 0.863 (0.026) & 0.879 (0.033) & 0.923 (0.026) & 0.993 (0.036)\\
0.10 & 4 & 0.858 (0.027) & 1.033 (0.057) & 0.914 (0.027) & 1.028 (0.049) & 0.905 (0.026) & 1.067 (0.047)\\
0.10 & 6 & 0.875 (0.025) & 1.134 (0.056) & 0.926 (0.029) & 1.099 (0.060) & 0.885 (0.024) & 1.179 (0.059)\\
0.10 & 8 & 0.845 (0.031) & 1.320 (0.082) & 0.942 (0.028) & 1.383 (0.079) & 0.929 (0.024) & 1.437 (0.084)\\
\addlinespace
0.30 & 0 & 0.716 (0.040) & 0.716 (0.040) & 0.700 (0.034) & 0.700 (0.034) & 0.761 (0.039) & 0.761 (0.039)\\
0.30 & 2 & 0.678 (0.041) & 0.739 (0.049) & 0.646 (0.039) & 0.721 (0.050) & 0.664 (0.036) & 0.715 (0.042)\\
0.30 & 4 & 0.737 (0.043) & 0.992 (0.073) & 0.679 (0.033) & 0.975 (0.061) & 0.723 (0.040) & 1.065 (0.079)\\
0.30 & 6 & 0.694 (0.031) & 1.385 (0.116) & 0.621 (0.030) & 1.353 (0.095) & 0.701 (0.036) & 1.464 (0.105)\\
0.30 & 8 & 0.628 (0.037) & 1.618 (0.145) & 0.703 (0.038) & 1.867 (0.153) & 0.669 (0.035) & 2.057 (0.162)\\
\addlinespace
0.50 & 0 & 0.519 (0.037) & 0.519 (0.037) & 0.533 (0.039) & 0.533 (0.039) & 0.477 (0.028) & 0.477 (0.028)\\
0.50 & 2 & 0.504 (0.035) & 0.609 (0.048) & 0.484 (0.032) & 0.547 (0.038) & 0.510 (0.031) & 0.601 (0.041)\\
0.50 & 4 & 0.388 (0.030) & 0.741 (0.059) & 0.487 (0.035) & 0.940 (0.082) & 0.457 (0.030) & 0.831 (0.069)\\
0.50 & 6 & 0.343 (0.031) & 0.960 (0.099) & 0.526 (0.036) & 1.407 (0.121) & 0.507 (0.037) & 1.304 (0.111)\\
0.50 & 8 & 0.461 (0.031) & 1.853 (0.172) & 0.478 (0.031) & 1.776 (0.154) & 0.410 (0.027) & 1.789 (0.171)\\
\bottomrule
\end{tabular}
\end{table*}

Figure~\ref{fig:finite-re-grid} gives the corresponding graphical comparison.  The $3\times3$ panels represent the combinations of $\gamma\in\{0.10,0.30,0.50\}$ and $n\in\{500,1000,1500\}$.  Each panel shows the empirical Full relative efficiency $\widehat{\RE}_{\mathrm{Full}:\mathrm{CC}}$, the corresponding population ARE, and the reference line RE$=1$.

The empirical curves closely follow the population trend.  For
$\gamma=0.10$, both $\widehat{\RE}_{\mathrm{Full}:\mathrm{CC}}$ and the ARE cross one at about $\xi_1=4$.  For $\gamma=0.30$, the transition remains near $\xi_1=4$, although the finite sample estimates show greater variability in this region.  
For $\gamma=0.50$, the population crossing lies between $\xi_1=4$ and $6$; the empirical crossing occurs in the same interval for $n=1000$ and $1500$, but is delayed to between $6$ and $8$ for $n=500$.
Thus, finite sample departures from the ARE are most pronounced near the RE$=1$ threshold, whereas away from this region the empirical and population curves exhibit the same increasing pattern.  At stronger dependence, the gain is substantial; for example, at $\gamma=0.30$, $\xi_1=8$, and $n=1500$, $\widehat{\RE}_{\mathrm{Full}:\mathrm{CC}}=2.057$.

\begin{figure}[htbp]
\centering
\includegraphics[width=0.98\textwidth]{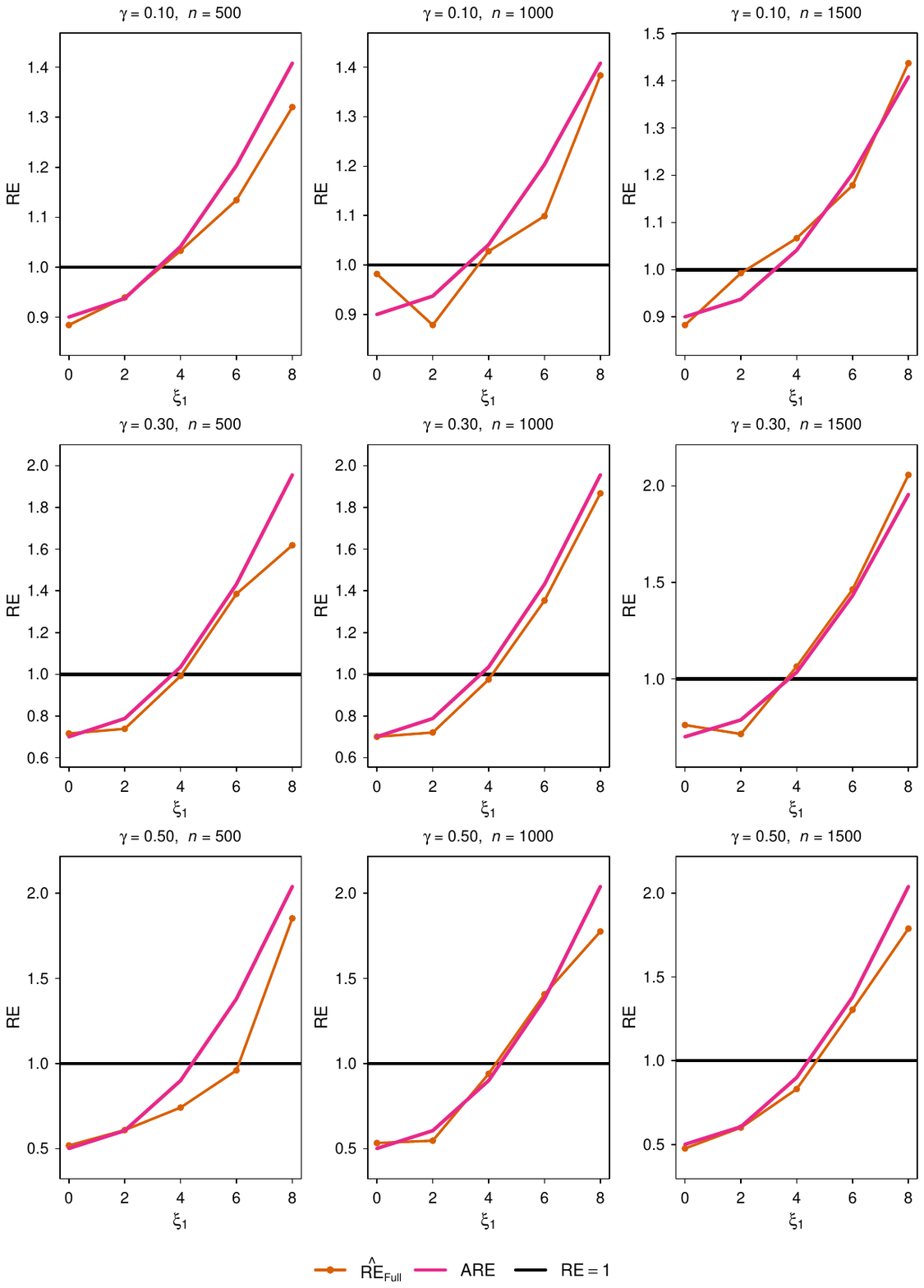}
\caption{Finite-sample relative efficiency of the Full estimator as a function of $\xi_1$. Rows correspond to $\gamma=0.10$, $0.30$, and $0.50$, and columns to $n=500$, $1000$, and $1500$. The orange curve is the empirical $\widehat{\RE}_{\mathrm{Full}:\mathrm{CC}}$ based on $B=500$ Monte Carlo replications, the magenta curve is the corresponding population ARE, and the thick black horizontal line marks RE$=1$.}
\label{fig:finite-re-grid}
\end{figure}

Therefore, the simulations support the comparison in Section~\ref{sec4.3}.  IG is less efficient than CC throughout the reported settings, whereas Full can achieve substantial gains when missingness is sufficiently concentrated near the decision boundary.  The dependence required for this gain increases with the missing-label proportion, and the largest finite-sample deviations from the ARE occur near the RE$=1$ transition.

\section{Discussion}\label{sec:discussion}

We have shown that favourable missingness can also arise in a two component exponential mixture. Unlike the homoscedastic Gaussian model, the component distributions are asymmetric and supported on the positive half line. Nevertheless, the log-posterior odds remain linear in the feature. Under the explicit interior boundary condition $r_0>1$, the Bayes' classifier is therefore determined by a single positive threshold.

The central result is the information decomposition \eqref{eq:maindecomp}. In particular, the information lost through the unavailable class labels is
$$ \bm D(\bm\beta,\bm\xi) 
= \E\!\left[ q(Y)\tau_1(Y)\tau_2(Y) \bm x(Y)\bm x(Y)^{\T} \right]
= \gamma\I_{\rm CC}^{({\rm clr})}(\bm\beta). $$
Thus, the loss is determined by the distribution of the features among observations whose labels are missing, rather than by the marginal missing label proportion alone. 
This distinction is important when missingness is concentrated in regions containing substantial conditional label information. After eliminating $\bm\xi$, the observed indicators of label missingness contribute the additional nonnegative term $\I_{\rm PC}^{({\rm miss})}(\bm\beta)$. For the exponential mixture, all terms in the decomposition can be evaluated using one dimensional integrals.

The classification error comparison follows from a second order expansion around the Bayes threshold.  Because the error rate is stationary at the optimal boundary, the leading expected excess error is of order $n^{-1}$ and depends on the asymptotic variance of the estimated threshold through \eqref{eq:expectedexcess}.  
This leads directly to the ARE in \eqref{eq:ARE}.  
The numerical calculations demonstrate that favourable regimes exist for the mixture model.  
Under a fixed intercept, any efficiency gain above one must occur at a finite value of $\xi_1$ and implies a non-monotone efficiency curve.
Under the fixed-$\gamma$ calibration considered numerically, the Full estimator becomes more efficient once the association between missingness and posterior uncertainty is sufficiently strong. 
The Monte Carlo results broadly agree with the population calculations, with the largest discrepancies occurring near the point at which the relative efficiency crosses one.

Several limitations should be noted.  
First, the model assumes uncensored exponential observations.  In survival and reliability applications, right censoring should be incorporated jointly with label missingness.  
Second, favourable performance relies on adequate specification of the missingness mechanism. Sensitivity analysis, alternative link functions and robust or bootstrap based diagnostics would therefore be useful in applications.  
Third, the comparison with the completely classified benchmark includes the additional parameter dependent observation provided by the missing label indicator $M$, and hence should not be interpreted as a violation of data processing principles.  Fourth, the present theory is one dimensional and restricted to two exponential components.  Natural extensions include gamma and Weibull mixtures, multivariate survival summaries, more than two components, covariate dependent mixing proportions, and robust sandwich inference under misspecification of the missingness mechanism.

From a design perspective, the fixed-$\gamma$ comparison in Section~\ref{sec4.3} separates the effect of where labels are missing from the expected number of missing labels, and provides a direct connection with uncertainty sampling and active learning \citep{Settles2009}.  The information decomposition suggests a local design principle: missingness should be concentrated in regions where the information contributed by the observed missing label pattern is sufficient to offset the conditionally weighted loss of class label information.  Developing globally optimal labelling policies under explicit cost, robustness, and model misspecification constraints is a natural direction for future work.

\bmhead{Acknowledgements}

This work was supported by the Australian Research Council [DP230101671] and the China Postdoctoral Science Foundation [2025M783140].

\appendix

\section{Proof of Proposition~\ref{prop:bayeserror}} \label{app:proof-bayes}
\begin{proof}
	Under the ordering $\mu_1<\mu_2$, we have $\beta_1<0$. 
    Since $\beta_0=\log r_0$, the discriminant function can be written as
	$$ d(y;\bm\beta)=\log r_0+\beta_1y, $$
	which is strictly decreasing on $(0,\infty)$.
	
	Suppose first that $r_0>1$. Then
	$$ \lim_{y\downarrow0}d(y;\bm\beta)=\log r_0>0,
	\qquad \lim_{y\to\infty}d(y;\bm\beta)=-\infty. $$
	Hence $d(y;\bm\beta)$ has the unique positive root
	$$ t_0=-\frac{\beta_0}{\beta_1}.$$
	It follows that the Bayes' rule assigns $0<y<t_0$ to class $C_1$
	and $y\geq t_0$ to class $C_2$. Therefore,
	$$ \Pr\{R(Y;\bm\beta)=2\mid Z=1\} = \Pr(Y\geq t_0\mid Z=1) =
	e^{-t_0/\mu_1}, $$
	and
	$$ \Pr\{R(Y;\bm\beta)=1\mid Z=2\} =\Pr(Y<t_0\mid Z=2)
	= 1-e^{-t_0/\mu_2}. $$
	Consequently,
	$$\err(\bm\theta)
    =\pi_1 e^{-t_0/\mu_1}+\pi_2\{1-e^{-t_0/\mu_2}\}.$$
	
	If $r_0\leq1$, then $\log r_0\leq0$. Since $\beta_1y<0$ for every $y>0$,
	$$ d(y;\bm\beta) = \log r_0+\beta_1 y<0 $$
	throughout the support. The Bayes' rule therefore assigns every observation to Class $C_2$. All observations from $C_1$ are misclassified, whereas all observations from $C_2$ are correctly classified, giving
	$$ \err(\bm\theta)=\pi_1. $$
\end{proof}

\section{Proof of Lemma~\ref{lem:entropy}}\label{app:proof-entropy}
\begin{proof}
    Since $ \tau(-d)=1-\tau(d), $ the entropy $ \en(-d)=\en(d).$
    Differentiating the entropy gives
    $$ \en'(d) = -\tau'(d)\{\log\tau(d)+1\} 
                 +\tau'(d)\big[\log\{1-\tau(d)\}+1\big], $$
    we obtain
	$$ \en'(d) = -d\,\tau(d)\{1-\tau(d)\}. $$
	Thus $\en'(d)>0$ for $d<0$, $\en'(d)<0$ for $d>0$, and
	$\en'(0)=0$. It follows that $\en(d)$ has the unique maximum
	$$ \en(0)=\log 2. $$

    As $d\to\infty$, $\tau(d)\to1$, whereas as $d\to-\infty$,
	$\tau(d)\to0$. Using the convention
	$$ \lim_{u\downarrow0}u\log u=0, $$
	we obtain
	$$ \lim_{|d|\to\infty}\en(d)=0. $$
	
	Finally, the binary logistic entropy can be written as
	$$ \en(d) = \log(1+e^d)-d\,\tau(d). $$
	As $d\to0$,
	$$ \log(1+e^d) = \log 2+\frac{d}{2}+\frac{d^2}{8} -\frac{d^4}{192}+O(d^6), $$
	and
	$$ d\,\tau(d) = \frac{d}{2}+\frac{d^2}{4} -\frac{d^4}{48}+O(d^6). $$
	Subtracting these expansions gives
	$$ \en(d) = \log 2-\frac{d^2}{8} +\frac{d^4}{64}+O(d^6), $$
	which implies \eqref{eq:entropyexpand}.
\end{proof}

\section{Proof of Theorem~\ref{thm:decomp}}\label{app:proof-decomp}
\begin{proof}
    All information matrices below are defined per observation.
    Partition the completely classified information for
    $\bm\kappa=(\nu,\bm\beta^\T)^\T$ as
    \begin{equation}\label{eq:Accblock}
        \I_{\rm CC}(\bm\kappa)= 
        \begin{pmatrix} A_{11} & A_{12}\\
                        A_{21} & A_{22} \end{pmatrix}.
    \end{equation}
    Since $\nu$ is a nuisance parameter, the efficient information for $\bm\beta$ is
    \begin{equation}\label{eq:Iccbeta}
        \I_{\rm CC}(\bm\beta) = A_{22}-A_{21}A_{11}^{-1}A_{12}.
    \end{equation}

    Because $M\perp Z\mid Y$ and the conditional scores of $Z\mid Y$ and $M\mid Y$ have mean zero given $Y$, the information from the marginal distribution of $Y$, the observed class labels, and the missing label indicators is additive, except for the $\bm\beta$--$\bm\xi$ block already included in $\bm B_{\rm miss}$.
	
    Conditional on $Y=y$, the Fisher information in an observed class label is 
    $$ \tau_1(y)\tau_2(y)\bm x(y)\bm x(y)^\T. $$
    Since the label is missing with probability $q(Y)$, the resulting information loss is
    $$\bm D=\E\left[ q(Y)\tau_1(Y)\tau_2(Y)\bm x(Y)\bm x(Y)^\T \right] =\gamma\I_{\rm CC}^{(\mathrm{clr})}(\bm\beta). $$
    Because the conditional distribution of $Z$ given $Y$ depends on $\bm\beta$ but not on $\nu$, $\bm D$ affects only the $\bm\beta\bm\beta^{\T}$ block.

    Using the information blocks in \eqref{eq:Bbb}--\eqref{eq:Imiss}, let the information contributed by the Bernoulli likelihood of $M$ be partitioned as
    $$ \bm B_{\rm miss}= \begin{pmatrix}
                          B_{\beta\beta} & B_{\beta\xi}\\
                          B_{\xi\beta}   & B_{\xi\xi}
                         \end{pmatrix}. $$
    Since the missing label model depends on $(\bm\beta,\bm\xi)$ but not on $\nu$, the joint Fisher information for 
    $\bm\Psi=(\nu,\bm\beta^{\T},\bm\xi^{\T})^{\T} $ under the full partially classified likelihood is
    $$ \I_{\rm PC}^{(\mathrm{full})}(\bm\Psi)=\begin{pmatrix}
        A_{11} & A_{12}                      &\bm0 \\
        A_{21} & A_{22}-\bm D+B_{\beta\beta} & B_{\beta\xi} \\
        \bm0   & B_{\xi\beta}                & B_{\xi\xi}
    \end{pmatrix}. $$
    Taking the Schur complement with respect to the nuisance coordinates $(\nu,\bm\xi^\T)^\T$ gives
    \begin{align*}
         \I_{\rm PC}^{(\mathrm{full})}(\bm\beta)
         &= A_{22}-A_{21}A_{11}^{-1}A_{12}-\bm D + B_{\beta\beta}
            -B_{\beta\xi} B_{\xi\xi}^{-1} B_{\xi\beta}\\
         &= \I_{\rm CC}(\bm\beta) -\bm D +\I_{\rm PC}^{(\mathrm{miss})}(\bm\beta).
    \end{align*}
    This proves Theorem~\ref{thm:decomp}.
\end{proof}

\section{Proof of Theorem~\ref{thm:risk}}
\label{app:proof-error-rate}
\begin{proof}
    The delta method gives
    $$\sqrt n(\widehat t-t_0) \xrightarrow{\mathcal{D}} 
       N\!\left(0, \bm g_t^{\T}I(\bm\beta)^{-1}\bm g_t \right).$$
    A Taylor expansion of $\err(\widehat t;\bm\theta)$ about $t_0$ gives
    $$\err(\widehat t;\bm\theta)= \err(t_0;\bm\theta) +\err'(t_0;\bm\theta)(\widehat t-t_0) +\frac12\err''(t_0;\bm\theta)(\widehat t-t_0)^2 +o_p(n^{-1}).$$

    Since $\err'(t_0;\bm\theta)=0$, taking expectations gives~ \eqref{eq:expectedexcess}.
\end{proof}

\section{Proof of Theorem~\ref{thm:optimum}}
\label{app:proof-optimum}
\begin{proof}
	When $\xi_1=0$, \eqref{eq:qentropy} reduces to
	$q(y)=\expit(\xi_0)$, which is independent of $y$ and $\bm\beta$.
	Hence
	$$ \I_{\rm PC}^{({\rm miss})}(\bm\beta)=\bm0. $$
	By Corollary~\ref{cor:mcar}, the unavailable labels remove a positive definite conditional label information matrix. Consequently,
	$$ \I_{\rm PC}^{({\rm full})}(\bm\beta) <\I_{\rm CC}(\bm\beta), $$
	and \eqref{eq:ARE} gives $\ARE(0)<1$. This proves part~(i).
    
	For part~(ii), since $\en(d)\leq\log2<1$ and $\expit(u)\leq e^u$,
	$$ q(y;\bm\beta,\bm\xi) \leq e^{\xi_0}(\log2)^{\xi_1}. $$
	Hence the conditionally weighted label loss term tends to zero as $\xi_1\to\infty$.
	Moreover,
	$$ \frac{d\,\tau_1(d)\tau_2(d)}{\en(d)} $$
	is bounded on $\mathbb R$. Hence every entry of
	$\bm B_{\beta\beta}$ is bounded in absolute value by a finite
	constant times
	$$ \xi_1^2e^{\xi_0}(\log2)^{\xi_1}\E(1+Y^2), $$
	which tends to zero. Since
	$$ \bm0\leq \I_{\rm PC}^{({\rm miss})}(\bm\beta) \leq\bm B_{\beta\beta}, $$
	we obtain
	$$ \I_{\rm PC}^{({\rm miss})}(\bm\beta)\longrightarrow\bm0.$$
	It follows from \eqref{eq:maindecomp} that
	$$ \I_{\rm PC}^{({\rm full})}(\bm\beta;\xi_1) \longrightarrow
	\I_{\rm CC}(\bm\beta), $$
	and therefore $\ARE(\xi_1)\to1$.
	
	For part~(iii), suppose that
	$\ARE(\widetilde\xi_1)>1$ for some finite
	$\widetilde\xi_1$. By part~(ii), there exists
	$M>\widetilde\xi_1$ such that
	$$ \ARE(\xi_1)<\ARE(\widetilde\xi_1), \qquad \xi_1\geq M. $$
	Continuity implies that $\ARE(\xi_1)$ attains its maximum on
	$[0,M]$. This maximum is greater than one and, by part~(i), cannot occur at $\xi_1=0$. It is therefore attained at some finite $\xi_1^*>0$. Finally,
	$$ \ARE(0)<1<\ARE(\widetilde\xi_1), \qquad \ARE(\xi_1)\longrightarrow1, $$ so $\ARE(\xi_1)$ cannot be monotone.
\end{proof}


\begin{thebibliography}{20}
\ifx \bisbn   \undefined \def \bisbn  #1{ISBN #1}\fi
\ifx \binits  \undefined \def \binits#1{#1}\fi
\ifx \bauthor  \undefined \def \bauthor#1{#1}\fi
\ifx \batitle  \undefined \def \batitle#1{#1}\fi
\ifx \bjtitle  \undefined \def \bjtitle#1{#1}\fi
\ifx \bvolume  \undefined \def \bvolume#1{\textbf{#1}}\fi
\ifx \byear  \undefined \def \byear#1{#1}\fi
\ifx \bissue  \undefined \def \bissue#1{#1}\fi
\ifx \bfpage  \undefined \def \bfpage#1{#1}\fi
\ifx \blpage  \undefined \def \blpage #1{#1}\fi
\ifx \burl  \undefined \def \burl#1{\textsf{#1}}\fi
\ifx \doiurl  \undefined \def \doiurl#1{\url{https://doi.org/#1}}\fi
\ifx \betal  \undefined \def \betal{\textit{et al.}}\fi
\ifx \binstitute  \undefined \def \binstitute#1{#1}\fi
\ifx \binstitutionaled  \undefined \def \binstitutionaled#1{#1}\fi
\ifx \bctitle  \undefined \def \bctitle#1{#1}\fi
\ifx \beditor  \undefined \def \beditor#1{#1}\fi
\ifx \bpublisher  \undefined \def \bpublisher#1{#1}\fi
\ifx \bbtitle  \undefined \def \bbtitle#1{#1}\fi
\ifx \bedition  \undefined \def \bedition#1{#1}\fi
\ifx \bseriesno  \undefined \def \bseriesno#1{#1}\fi
\ifx \blocation  \undefined \def \blocation#1{#1}\fi
\ifx \bsertitle  \undefined \def \bsertitle#1{#1}\fi
\ifx \bsnm \undefined \def \bsnm#1{#1}\fi
\ifx \bsuffix \undefined \def \bsuffix#1{#1}\fi
\ifx \bparticle \undefined \def \bparticle#1{#1}\fi
\ifx \barticle \undefined \def \barticle#1{#1}\fi
\bibcommenthead
\ifx \bconfdate \undefined \def \bconfdate #1{#1}\fi
\ifx \botherref \undefined \def \botherref #1{#1}\fi
\ifx \url \undefined \def \url#1{\textsf{#1}}\fi
\ifx \bchapter \undefined \def \bchapter#1{#1}\fi
\ifx \bbook \undefined \def \bbook#1{#1}\fi
\ifx \bcomment \undefined \def \bcomment#1{#1}\fi
\ifx \oauthor \undefined \def \oauthor#1{#1}\fi
\ifx \citeauthoryear \undefined \def \citeauthoryear#1{#1}\fi
\ifx \endbibitem  \undefined \def \endbibitem {}\fi
\ifx \bconflocation  \undefined \def \bconflocation#1{#1}\fi
\ifx \arxivurl  \undefined \def \arxivurl#1{\textsf{#1}}\fi
\csname PreBibitemsHook\endcsname

\bibitem[\protect\citeauthoryear{McLachlan}{1975}]{mclachlan1975iterative}
\begin{barticle}
\bauthor{\bsnm{McLachlan}, \binits{G.J.}}:
\batitle{Iterative reclassification procedure for constructing an asymptotically optimal rule of allocation in discriminant analysis}.
\bjtitle{Journal of the American Statistical Association}
\bvolume{70} (\bissue{350}),
\bfpage{365}--\blpage{369}
(\byear{1975})
\end{barticle}
\endbibitem

\bibitem[\protect\citeauthoryear{Ganesalingam and McLachlan}{1978}]{ganesalingam1978efficiency}
\begin{barticle}
\bauthor{\bsnm{Ganesalingam}, \binits{S.}},
\bauthor{\bsnm{McLachlan}, \binits{G.J.}}:
\batitle{The efficiency of a linear discriminant function based on unclassified initial samples}.
\bjtitle{Biometrika}
\bvolume{65} (\bissue{3}),
\bfpage{658}--\blpage{665}
(\byear{1978})
\end{barticle}
\endbibitem

\bibitem[\protect\citeauthoryear{O'Neill}{1978}]{ONeill1978}
\begin{barticle}
\bauthor{\bsnm{O'Neill}, \binits{T.J.}}:
\batitle{Normal discrimination with unclassified observations}.
\bjtitle{Journal of the American Statistical Association}
\bvolume{73} (\bissue{364}),
\bfpage{821}--\blpage{826}
(\byear{1978})
\end{barticle}
\endbibitem

\bibitem[\protect\citeauthoryear{McLachlan and Gordon}{1989}]{McLachlanGordon1989}
\begin{barticle}
\bauthor{\bsnm{McLachlan}, \binits{G.J.}},
\bauthor{\bsnm{Gordon}, \binits{R.D.}}:
\batitle{Mixture models for partially unclassified data: {A} case study of renal venous renin in hypertension}.
\bjtitle{Statistics in Medicine}
\bvolume{8} (\bissue{10}),
\bfpage{1291}--\blpage{1300}
(\byear{1989})
\end{barticle}
\endbibitem

\bibitem[\protect\citeauthoryear{Dempster et~al.}{1977}]{Dempster1977}
\begin{barticle}
\bauthor{\bsnm{Dempster}, \binits{A.P.}},
\bauthor{\bsnm{Laird}, \binits{N.M.}},
\bauthor{\bsnm{Rubin}, \binits{D.B.}}:
\batitle{Maximum likelihood from incomplete data via the {EM} algorithm}.
\bjtitle{Journal of the Royal Statistical Society: Series B (Methodological)}
\bvolume{39} (\bissue{1}),
\bfpage{1}--\blpage{22}
(\byear{1977})
\end{barticle}
\endbibitem

\bibitem[\protect\citeauthoryear{McLachlan et~al.}{2019}]{mclachlan2019finite}
\begin{barticle}
\bauthor{\bsnm{McLachlan}, \binits{G.J.}},
\bauthor{\bsnm{Lee}, \binits{S.X.}},
\bauthor{\bsnm{Rathnayake}, \binits{S.I.}}:
\batitle{Finite mixture models}.
\bjtitle{Annual Review of Statistics and its Application}
\bvolume{6} (\bissue{1}),
\bfpage{355}--\blpage{378}
(\byear{2019})
\end{barticle}
\endbibitem

\bibitem[\protect\citeauthoryear{Hady and Schwenker}{2013}]{hady2013semi}
\begin{botherref}
\oauthor{\bsnm{Hady}, \binits{M.F.A.}},
\oauthor{\bsnm{Schwenker}, \binits{F.}}:
Semi-supervised learning.
Handbook on Neural Information Processing,
215--239
(2013)
\end{botherref}
\endbibitem

\bibitem[\protect\citeauthoryear{Van et~al.}{2020}]{vanEngelenHoos2020}
\begin{barticle}
\bauthor{\bsnm{Van}, \binits{E.}},
\bauthor{\bsnm{Jesper}, \binits{E.}},
\bauthor{\bsnm{Hoos}, \binits{H.H.}}:
\batitle{A survey on semi-supervised learning}.
\bjtitle{Machine Learning}
\bvolume{109} (\bissue{2}),
\bfpage{373}--\blpage{440}
(\byear{2020})
\end{barticle}
\endbibitem

\bibitem[\protect\citeauthoryear{Ahfock and McLachlan}{2020}]{AhfockMcLachlan2020}
\begin{barticle}
\bauthor{\bsnm{Ahfock}, \binits{D.}},
\bauthor{\bsnm{McLachlan}, \binits{G.J.}}:
\batitle{An apparent paradox: {A} classifier based on a partially classified sample may have smaller expected error rate than that if the sample were completely classified}.
\bjtitle{Statistics and Computing}
\bvolume{30} (\bissue{6}),
\bfpage{1779}--\blpage{1790}
(\byear{2020})
\end{barticle}
\endbibitem

\bibitem[\protect\citeauthoryear{Louis}{1982}]{Louis1982}
\begin{barticle}
\bauthor{\bsnm{Louis}, \binits{T.A.}}:
\batitle{Finding the observed information matrix when using the {EM} algorithm}.
\bjtitle{Journal of the Royal Statistical Society Series B: Statistical Methodology}
\bvolume{44} (\bissue{2}),
\bfpage{226}--\blpage{233}
(\byear{1982})
\end{barticle}
\endbibitem

\bibitem[\protect\citeauthoryear{Castelli and Cover}{1995}]{CastelliCover1995}
\begin{barticle}
\bauthor{\bsnm{Castelli}, \binits{V.}},
\bauthor{\bsnm{Cover}, \binits{T.M.}}:
\batitle{On the exponential value of labeled samples}.
\bjtitle{Pattern Recognition Letters}
\bvolume{16} (\bissue{1}),
\bfpage{105}--\blpage{111}
(\byear{1995})
\end{barticle}
\endbibitem

\bibitem[\protect\citeauthoryear{Castelli and Cover}{1996}]{CastelliCover1996}
\begin{barticle}
\bauthor{\bsnm{Castelli}, \binits{V.}},
\bauthor{\bsnm{Cover}, \binits{T.M.}}:
\batitle{The relative value of labeled and unlabeled samples in pattern recognition with an unknown mixing parameter}.
\bjtitle{IEEE Transactions on Information Theory}
\bvolume{42} (\bissue{6}),
\bfpage{2102}--\blpage{2117}
(\byear{1996})
\end{barticle}
\endbibitem

\bibitem[\protect\citeauthoryear{Lyu}{2024}]{Lyu2024}
\begin{barticle}
\bauthor{\bsnm{Lyu}, \binits{Z.}}:
\batitle{Analysis of estimating the {B}ayes' rule for {G}aussian mixture models with a specified missing-data mechanism}.
\bjtitle{Computational Statistics}
\bvolume{39} (\bissue{7}),
\bfpage{3727}--\blpage{3751}
(\byear{2024})
\end{barticle}
\endbibitem

\bibitem[\protect\citeauthoryear{Lyu et~al.}{2024}]{LyuAhfockThompsonMcLachlan2024}
\begin{barticle}
\bauthor{\bsnm{Lyu}, \binits{Z.}},
\bauthor{\bsnm{Ahfock}, \binits{D.}},
\bauthor{\bsnm{Thompson}, \binits{R.}},
\bauthor{\bsnm{McLachlan}, \binits{G.J.}}:
\batitle{Semi-supervised {G}aussian mixture modelling with a missing-data mechanism in {R}}.
\bjtitle{Australian \& New Zealand Journal of Statistics}
\bvolume{66} (\bissue{2}),
\bfpage{146}--\blpage{162}
(\byear{2024})
\end{barticle}
\endbibitem

\bibitem[\protect\citeauthoryear{Wu et~al.}{2026}]{WuWangMcLachlan2026}
\begin{barticle}
\bauthor{\bsnm{Wu}, \binits{J.}},
\bauthor{\bsnm{Wang}, \binits{Y.G.}},
\bauthor{\bsnm{McLachlan}, \binits{G.J.}}:
\batitle{Informative missingness and its implications in semi-supervised learning}.
\bjtitle{The Innovation Informatics}
\bvolume{2} (\bissue{2}),
\bfpage{100033}
(\byear{2026})
\end{barticle}
\endbibitem

\bibitem[\protect\citeauthoryear{Jewell}{1982}]{Jewell1982}
\begin{botherref}
\oauthor{\bsnm{Jewell}, \binits{N.P.}}:
Mixtures of exponential distributions.
The Annals of Statistics,
479--484
(1982)
\end{botherref}
\endbibitem

\bibitem[\protect\citeauthoryear{Teicher}{1961}]{Teicher1961}
\begin{barticle}
\bauthor{\bsnm{Teicher}, \binits{H.}}:
\batitle{Identifiability of mixtures}.
\bjtitle{The Annals of Mathematical Statistics}
\bvolume{32} (\bissue{1}),
\bfpage{244}--\blpage{248}
(\byear{1961})
\end{barticle}
\endbibitem

\bibitem[\protect\citeauthoryear{Yakowitz and Spragins}{1968}]{YakowitzSpragins1968}
\begin{barticle}
\bauthor{\bsnm{Yakowitz}, \binits{S.J.}},
\bauthor{\bsnm{Spragins}, \binits{J.D.}}:
\batitle{On the identifiability of finite mixtures}.
\bjtitle{The Annals of Mathematical Statistics}
\bvolume{39} (\bissue{1}),
\bfpage{209}--\blpage{214}
(\byear{1968})
\end{barticle}
\endbibitem

\bibitem[\protect\citeauthoryear{Rubin}{1976}]{Rubin1976}
\begin{barticle}
\bauthor{\bsnm{Rubin}, \binits{D.B.}}:
\batitle{Inference and missing data}.
\bjtitle{Biometrika}
\bvolume{63} (\bissue{3}),
\bfpage{581}--\blpage{592}
(\byear{1976})
\end{barticle}
\endbibitem

\bibitem[\protect\citeauthoryear{Settles}{2009}]{Settles2009}
\begin{botherref}
\oauthor{\bsnm{Settles}, \binits{B.}}:
Active learning literature survey.
Technical Report
(2009)
\end{botherref}
\endbibitem

\end{thebibliography}

\end{document}